\newcommand{\R}{\mathbb{R}}
\newcommand{\sA}{\mathscr{A}}
\newcommand{\sD}{\mathscr{D}}
\newcommand{\sDa}{\sD_{\mathrm{a}}}
\newcommand{\sX}{\mathscr{X}}
\newcommand{\bfbeta}{\mbox{\boldmath$\beta$}}
\newcommand{\bfm}{\mbox{\boldmath$m$}}
\newcommand{\sbfbeta}{\mbox{\scriptsize \boldmath$\beta$}}
\newcommand{\sbfm}{\mbox{\scriptsize \boldmath$m$}}
\newcommand{\qdel}{\delta}
\newcommand{\sigmin}{\sigma_{\mathrm{min}}}
\newtheorem{theorem}{Theorem}[section]
\newtheorem{lemma}[theorem]{Lemma}
\newtheorem{proposition}[theorem]{Proposition}
\newtheorem{remark}[theorem]{Remark}
\newtheorem{corollary}[theorem]{Corollary}
\title{Distributed noise-shaping quantization: I. 
Beta duals of finite frames and near-optimal quantization of random measurements}
\author{Evan Chou\footnote{Current affiliation: Google New York. email: \tt{chou@cims.nyu.edu}}\, 
and 
C.~Sinan G\"unt\"urk\footnote{Address: Courant Institute, 251 Mercer Street, New York, NY 10012. email: \tt{gunturk@cims.nyu.edu}}
 \\ Courant Institute, NYU}
\date{submitted: May 18, 2014, last revised: Feb 24, 2016}
\begin{document}

\maketitle

\begin{abstract}
This paper introduces a new algorithm for the so-called ``Analysis Problem'' in 
quantization of finite frame representations which provides a near-optimal solution in the case of random measurements. The main contributions include 
the development of a general quantization framework called {\em distributed noise-shaping}, and in particular, {\em beta duals} of frames, as well as the performance analysis of beta duals in both deterministic and probabilistic settings. It is shown that for random frames, using beta duals results in near-optimally accurate reconstructions with respect to both the frame redundancy and the number of levels that the frame coefficients are quantized at. More specifically, for any frame $E$ of $m$ vectors in $\R^k$ except possibly from a subset of Gaussian measure exponentially small in $m$ and for any number $L \geq 2$ of quantization levels per measurement to be used to encode the unit ball in $\R^k$, there is an algorithmic quantization scheme and a dual frame together which guarantee a reconstruction error of at most $\sqrt{k}L^{-(1-\eta)m/k}$, where $\eta$ can be arbitrarily small for sufficiently large problems. Additional features of the proposed algorithm include low computational cost and parallel implementability. 
\end{abstract}

\paragraph{Keywords} Finite frames, quantization, random matrices, noise shaping, beta encoding.
\paragraph{Mathematics Subject Classification} 41A29, 94A29, 94A20, 42C15, 15B52

\section{Introduction}

Let $\sX$ denote a bounded subset of $\R^k$, such as the unit ball $B^k_2$ with respect to the $\ell^2$-norm,
and $\sA$ denote a quantization alphabet (a finite subset of $\R$), such as an arithmetic
progression. Given a finite frame, i.e., a list of $m$ vectors 
$F:=\{f_1,\dots,f_m\}$ that span $\R^k$, 
we consider 
\begin{equation}\label{synthesis-dist}
 \sD_{\mathrm{s}}(F,\sA,\sX) := \sup_{x \in \sX} \inf_{q \in \sA^m} \left \| x - \sum_{i=1}^m q_i f_i \right \|_2
\end{equation}
which represents the smallest error of approximation that can be attained uniformly over $\sX$ by considering linear combinations of elements of $F$ with coefficients in $\sA$. We will call this quantity the {\em synthesis distortion} associated with $(F,\sA,\sX)$ (hence the subscript `s'). Understanding the nature of 
the synthesis distortion and finding quantization algorithms (i.e., maps
$x \in \sX \mapsto q \in \sA^m$) that achieve or even 
approximate this quantity is generally an open problem, usually referred to as
the ``Synthesis Problem'' in the theory of quantization of frame representations
\cite{powell2013quantization}. 

The synthesis distortion can be seen as a generalized notion of {\em linear discrepancy} \cite{MathAtoD, molino2012approximation}. Recall (e.g. \cite{chazelle2002discrepancy,matousek_discrepancy}) that the linear discrepancy of a $k \times m$ matrix $\Phi$ is defined as
\begin{equation}\label{defdisc}
 \mathrm{lindisc}(\Phi) := \sup_{y \in [-1,1]^m} \inf_{q \in \{-1,1\}^m} \|\Phi(y-q)\|_\infty. 
\end{equation}
We may define $\mathrm{lindisc}_p(\Phi)$ by replacing the $\infty$-norm in \eqref{defdisc} with any $p$-norm, $1 \leq p \leq \infty$. We would then have 
\begin{equation*}
\mathrm{lindisc}_2(F) =  \sD_{\mathrm{s}}(F,\{-1,1\},F([-1,1]^m)), 
\end{equation*}
where $F$ also stands for the matrix whose columns are $(f_i)_1^m$.

Consider any other frame $E := \{e_1,\dots,e_m\}$ in $\R^k$ that is dual to $F$, i.e.
\begin{equation}\label{dual-frame}
\sum_{i=1}^m \langle x,e_i \rangle f_i = x \mbox{ for all } x \in \R^k.
\end{equation}
We will write $E$ for the matrix whose
{\em rows} are $(e^\top_i)_1^m$. (Depending on the role it assumes, 
$E$ will be called the analysis frame or the sampling operator, and similarly
$F$ will be called the synthesis frame or the reconstruction operator.)
It is then evident that \eqref{dual-frame} is equivalent to $FE = I$ 
and also that duality of frames is a symmetric relation. 
We define the {\em analysis distortion} associated with $(E,\sA,\sX)$ by
\begin{equation}\label{analysis-dist}
 \sDa(E,\sA,\sX) := \inf \{\sD_{\mathrm{s}}(F,\sA,\sX) : FE= I \}.
\end{equation}

The analysis distortion has the following interpretation: Given an analysis frame $E$,
the encoder chooses a quantization map $Q:\R^m \to \sA^m$ and the decoder chooses 
a synthesis frame $F$ that is dual to $E$. The encoder maps each $x \in \sX$ (or equivalently, 
the measurement vector $y := Ex$) to its quantized version $q := Q(Ex)$, and using $q$
the decoder produces $Fq$ as its approximation to $x$. 
The optimal choice of $Q$ and $F$ (dual to $E$) yields the analysis distortion. We will refer to this optimization problem as the ``Analysis Problem''. (Also see \cite{powell2013quantization}.)

There are some remarks to be made regarding the definition of the analysis distortion and
its above interpretation: 

First, the assumption that 
the reconstruction must be ``linear'' is made for practical reasons. Non-linear methods (such as {\em consistent reconstruction} \cite{thao1996lower,goyal1998quantized}) have also been considered 
in the literature for the above type of encoders, but these methods generally come with higher
computational cost. 

Second, the assumption that $F$ be dual to $E$ is made so that 
the reconstruction is guaranteed to be exact in the high-resolution quantization limit, or simply when there is no quantization. It should be noted that 
this assumption is also meaningful in order to make sense of how the choice of $E$ might influence the reconstruction accuracy: without any assumptions on $F$ or $Q$, $F$ can be chosen optimally (in the sense
of minimizing the synthesis distortion for $\sX$) and $Q$ can be chosen to produce the corresponding
optimally quantized coefficients, thereby erasing the influence of $E$.

To make these points more precise, let us define, for any pair of frames $(E,F)$ (not necessarily dual to each other) and any quantization map $Q:\R^m \to \sA^m$,
\begin{equation} \label{DEQFX}
 \sD(E,Q,F,\sX) := \sup_{x \in \sX} \| x - F Q (Ex) \|_2.
\end{equation}
With this definition, we trivially obtain $\sD(E,Q,F,\sX) \geq \sD_{\mathrm{s}}(F,\sA,\sX)$ so that 
\begin{equation} \label{infDEQFX}
\inf_Q ~ \sD(E,Q,F,\sX) \geq \sD_{\mathrm{s}}(F,\sA,\sX).
\end{equation}
In fact, \eqref{infDEQFX} is an equality. To see this, note that the infimum in \eqref{synthesis-dist} is achieved for every $x$, so there is an optimal 
$q^*(x) \in \sA^m$ that yields
\begin{equation*}
\sD_{\mathrm{s}}(F,\sA,\sX) = \sup_{x\in \sX} \| x - F q^*(x) \|_2.
\end{equation*}
Since $E$ is injective, a map $Q^*:\R^m \to \sA^m$ can now be found such that $Q^*(Ex) = q^*(x)$, which then yields
\begin{equation*}
\sD_{\mathrm{s}}(F,\sA,\sX) = \sD(E,Q^*,F,\sX) =  \inf_Q  ~\sD(E,Q,F,\sX). 
\end{equation*}
This relation is valid for all $E$ and $F$. Fixing $E$ and minimizing over its dual frames $F$ now 
results in the interpretation of the analysis distortion given in the previous paragraph:
\begin{equation} \label{DaEAXeq}
\sD_{\mathrm{a}}(E,\sA,\sX) = \inf_{(Q,F) : FE = I}  ~\sD(E,Q,F,\sX).
\end{equation}

Finally, it should be noted that in the setup of this paper we do not allow the decoder to pick a different linear map $F_q$ depending on the quantized vector $q$ that it receives. While such an adaptive method could be considered in practice, it would generally be classified as a non-linear decoder. 

The distortion performance of any given encoder-decoder scheme (adaptive or not) that 
uses at most $N$ codewords for $\sX$ is constrained by the
entropic lower bound $\varepsilon(N,\sX)$ which is defined to
be the smallest $\varepsilon>0$ for which there exists an $\varepsilon$-net\footnote{As is common, our convention for $\varepsilon$-nets is synonymous to $\varepsilon$-coverings: $\mathcal N$ is an $\varepsilon$-net for $\sX$ if for all $x \in \sX$ there exists $y \in \mathcal{N}$ with $\|x-y\|_2\leq \varepsilon$.} for $\sX$ of cardinality $N$.
Then we have
\begin{equation}\label{dist-relations-1}
\sD(E,Q,F,\sX) \geq \sD_{\mathrm{s}}(F,\sA,\sX)
\geq \sD_{\mathrm{a}}(E,\sA,\sX) \geq \varepsilon(|\sA|^m,\sX)
\end{equation}
where the second inequality assumes that $FE=I$.
The entropic lower bound depends on $\sA$ only through its cardinality, so it will not necessarily be
useful without a suitable matching between $\sA$ and $\sX$. For example, if 
the convex hull of $F(\sA^m)$ is small compared to $\sX$, then  
$\sD_{\mathrm{s}}(F,\sA,\sX)$ will be artificially large. 
For the rest of this paper, we will fix $\sX = B^k_2$. With slight abuse of notation, define
\begin{equation} \label{DsFL}
\sD_{\mathrm{s}}(F,L) := \inf_{\sA : |\sA| = L} \sD_{\mathrm{s}}(F,\sA,B^k_2),
\end{equation}
and
\begin{equation} \label{DaEL}
\sD_{\mathrm{a}}(E,L) := \inf_{\sA : |\sA| = L} \sD_{\mathrm{a}}(E,\sA,B^k_2).
\end{equation}

Since the cardinality $N$ of any $\varepsilon$-net for $B^k_2$ must satisfy the volumetric constraint $N \mathrm{vol}(\varepsilon B^k_2) \geq  \mathrm{vol}(B^k_2)$, we have $\varepsilon(N,B^k_2) \geq N^{-1/k}$. Hence, the relations in \eqref{dist-relations-1} and the definitions in \eqref{DsFL} and \eqref{DaEL} yield, with $N=L^m$, 
\begin{equation*}
\sD_{\mathrm{s}}(F,L) \geq \sDa(E,L) \geq L^{-m/k}. 
\end{equation*}

One of the main results of this paper is a matching upper bound for the analysis distortion for Gaussian random frames. We call $E$ a {\em standard Gaussian random frame} if $(E_{i,j})$ are independent standard Gaussian random variables.

\begin{theorem} \label{T:MAIN}
There exist absolute positive constants $c_1$ and $c_2$ such that for any scalar
$\eta \in (0,1)$, if $k \geq \frac{4}{\eta}$ and $\frac{m}{k} \geq \frac{c_1}{\eta^2} \log \frac{m}{k}$, then
with probability at least $1 - \exp(- c_2 \eta^2 m)$,
a standard Gaussian random frame $E$ of $m$ vectors in $\R^k$ satisfies
\begin{equation}\label{main-thm-bound}
 \sDa(E,L) \leq \sqrt{k} L^{-(1-\eta)m/k}.
\end{equation}
for all $L \geq 2$.
\end{theorem}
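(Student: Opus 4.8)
The plan is to exhibit, for a suitable alphabet $\sA$ with $|\sA|=L$, an explicit dual frame $F$ (the \emph{beta dual} of $E$) together with a scalar quantization rule $Q$, to bound $\sup_{x\in B_1}\|x-FQ(Ex)\|_2$ deterministically by a ratio $\beta^{-\lambda}\mu/\sigma_{\min}(\widetilde E)$, and then to control the two random quantities entering this ratio by Gaussian estimates. Concretely, fix an integer block count $p>k$, set $\lambda:=\lfloor m/p\rfloor$ (discarding the last $m-p\lambda$ rows), fix a base $\beta\in(1,L)$, and let $\sA=\delta\{-(L-1)/2,\dots,(L-1)/2\}$ with spacing $\delta$ to be pinned down. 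On each of the $p$ consecutive blocks run the greedy scalar $\beta$-quantizer on $y=Ex$: from state $0$, set $q_i=\arg\min_{a\in\sA}|\beta u_{i-1}+y_i-a|$ and $u_i=\beta u_{i-1}+y_i-q_i$ (indices local to the block). Provided this never saturates, it realizes the noise-shaping identity $q=Ex+Hu$ with $H$ block-diagonal (each block a $\lambda\times\lambda$ beta-bidiagonal matrix) and $\|u\|_\infty\le\mu:=\delta/2$. Since $w_\beta^\top H=-\beta^{-\lambda}e_\lambda^\top$ blockwise for $w_\beta:=(\beta^{-1},\dots,\beta^{-\lambda})^\top$, applying $w_\beta$ to each block of $q$ gives $c_j:=\langle w_\beta,q^{(j)}\rangle=\langle \widetilde e_j,x\rangle-\beta^{-\lambda}u^{(j)}_\lambda$ with $\widetilde e_j:=(E^{(j)})^\top w_\beta$. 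Stacking the $\widetilde e_j^\top$ into $\widetilde E\in\R^{p\times k}$ and decoding by $\hat x:=\widetilde E^{\dagger}c$ defines $F:=\widetilde E^{\dagger}P$ with $P\colon q\mapsto c$; as $\widetilde E=PE$, we get $FE=\widetilde E^{\dagger}\widetilde E=I$, so $F$ is a genuine dual frame (this is the beta dual attached to $(E,H)$, and it is local and cheap).

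From $FE=I$ and $q=Ex+Hu$, $x-\hat x=-\widetilde E^{\dagger}(PHu)$ with $(PHu)_j=-\beta^{-\lambda}u^{(j)}_\lambda$, hence
\[
 \sup_{x\in B_1}\|x-FQ(Ex)\|_2\ \le\ \frac{\sqrt p\,\beta^{-\lambda}\,\mu}{\sigma_{\min}(\widetilde E)}.
\]
It remains to make the quantizer feasible and to control $\sigma_{\min}(\widetilde E)$. For feasibility: for $x\in B_1$, $|y_i|\le\|e_i\|_2\le\|E\|_{2\to\infty}$, so $|\beta u_{i-1}+y_i|\le\beta\mu+\|E\|_{2\to\infty}$, and the greedy step stays in $\sA$ at resolution $\mu$ once $\delta\ge 2\|E\|_{2\to\infty}/(L-\beta)$; I take equality, so $\mu=\|E\|_{2\to\infty}/(L-\beta)$.

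Two probabilistic inputs are then needed, each off an event of probability $\le e^{-c\eta^2 m}$ for an absolute $c>0$. First, $\|E\|_{2\to\infty}=\max_i\|e_i\|_2\le\sqrt k+C\eta\sqrt m$: each $\|e_i\|_2$ concentrates around $\sqrt k$ by Gaussian norm concentration, and the deviation level $C\eta\sqrt m$ is exactly what brings the union bound over the $m$ rows below $e^{-c\eta^2 m}$. Second, because the blocks are disjoint, $\widetilde E=\|w_\beta\|\,G$ with $G$ a standard $p\times k$ Gaussian; I need $\sigma_{\min}(G)\ge\epsilon_0$ with probability $\ge 1-e^{-c\eta^2 m}$, and the least-singular-value tail for a rectangular Gaussian, $\Pr[\sigma_{\min}(G)\le\epsilon]\lesssim (C\epsilon)^{\,p-k+1}$, forces $\epsilon_0\asymp e^{-c\eta^2 m/(p-k)}$. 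Using $\|w_\beta\|\ge\beta^{-1}$ this yields $\sigma_{\min}(\widetilde E)^{-1}\lesssim\beta\,e^{c\eta^2 m/(p-k)}$.

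Finally I choose $p$ with $p-k\asymp\eta k$ (so $\lambda$ exceeds $(1-\eta)m/k$ by a $\Theta(\eta^2 m/k)$ margin; this is where $k\ge 8/\eta$ enters) and $\beta=(1-1/\lambda)L$, so $\beta^{-\lambda}\le e\,L^{-\lambda}$ and $L-\beta=L/\lambda$. Plugging everything in,
\[
 \sup_{x\in B_1}\|x-FQ(Ex)\|_2\ \lesssim\ \sqrt k\cdot\mathrm{poly}(m,L)\cdot L^{-\lambda}\cdot e^{c\eta m/k},
\]
and the $\Theta(\eta^2 m/k)$ excess of $\lambda$ over $(1-\eta)m/k$, combined with $m/k\ge c_1\eta^{-2}\log(m/k)$ (for $c_1$ large) and with $c$ in the small-ball bound taken to be a sufficiently small absolute constant, absorbs both the polynomial factor and the $e^{c\eta m/k}$, landing at $\le\sqrt k\,L^{-(1-\eta)m/k}$; a union bound over the two bad events gives the claimed probability $1-e^{-c_2\eta^2 m}$ (and when $\sqrt k\,L^{-(1-\eta)m/k}\ge 1$ the assertion is vacuous since $\hat x=0$ already works). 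I expect the main obstacle to be precisely this last balancing act: the target failure probability $e^{-c_2\eta^2 m}$ can only be reached through the exponent $p-k+1$ of the small-ball estimate, which is why $\epsilon_0$ must be exponentially small in $\eta m/k$ and why the error picks up a factor $e^{c\eta m/k}$; making this affordable is exactly what forces the $(1-\eta)$ loss in the exponent and the quantitative lower bound on $m/k$, and fixes the admissible ranges of the absolute constants. Pushing all the polynomial and sub-exponential bookkeeping through — the interplay of the redundancy margin $p-k$, the base $\beta$, the spacing $\delta$, and the dynamic range $\|E\|_{2\to\infty}$ — is the technical heart of the argument.
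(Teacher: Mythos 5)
Your construction is exactly the paper's: partition the rows of $E$ into $p$ blocks, run the greedy $\beta$-quantizer on each block, condense with $w_\beta=(\beta^{-1},\dots,\beta^{-\lambda})$ so that $\wt E:=PE$ is (up to the scalar $\|w_\beta\|_2$) a standard $p\times k$ Gaussian matrix, decode with the $V$-dual $\wt E^{\dagger}P$, and control the two random ingredients $\|E\|_{2\to\infty}$ and $\sigmin(\wt E)$ separately; your deterministic bound $\sqrt{p}\,\beta^{-\lambda}\mu/\sigmin(\wt E)$ is precisely \eqref{E:bdest}. The cosmetic differences (per-row norm concentration instead of the operator-norm bound $\mu=4\sqrt m$; citing a standard rectangular small-ball estimate instead of proving one by a covering argument as in Theorem \ref{T:svmingaussrect}) are immaterial.

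The gap is in the endgame, in two places. First, uniformity over $L$: the event $\{\sigmin(G_\beta)\ge\epsilon_0\}$ depends on $L$ through $\beta$ (the matrix $G_\beta=\wt E/\|w_\beta\|_2$ is a different function of $E$ for each $L$), so ``a union bound over the two bad events'' must in fact be a union bound over countably many $L$-dependent events; with your $L$-independent threshold $\epsilon_0\asymp e^{-c\eta^2 m/(p-k)}$ each such event fails with probability $e^{-c\eta^2 m}$ and the sum over $L\ge2$ diverges. Second, the balancing: with $p-k\asymp\eta k$ the small-ball exponent is $\asymp\eta k$, so hitting failure probability $e^{-c_2\eta^2 m}$ forces $\epsilon_0\lesssim e^{-c'c_2\eta m/k}$ and hence an error inflation of $e^{c'c_2\eta m/k}$; your stated margin $\lambda-(1-\eta)m/k=\Theta(\eta^2 m/k)$ absorbs this only if $c_2\eta\lesssim\eta^2\log L$, i.e.\ $c_2\lesssim\eta\log L$, which is not an absolute constant when $L=2$ and $\eta$ is small. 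Both problems are cured simultaneously by the paper's device: make the threshold $L$-dependent, $\epsilon_0=\beta^{-\eta'\lambda}$ with $\eta'\asymp\eta$. The inflation $\epsilon_0^{-1}=\beta^{\eta'\lambda}$ then folds directly into the exponent, turning $\beta^{-\lambda}$ into $\beta^{-(1-\eta')\lambda}$ with no separate margin needed, and the per-$L$ failure probability becomes of order $(\,\cdots)^k L^{-\eta'\lambda(p-k)}\approx L^{-c\eta^2 m}$, which decays in $L$ fast enough that $\sum_{L\ge2}L^{-c\eta^2 m}\le 3\cdot 2^{-c\eta^2 m}$. (Alternatively, take $p-k$ a small multiple of $\eta k$ so the margin is $\Theta(\eta m/k)$ rather than $\Theta(\eta^2 m/k)$, and still let $\epsilon_0$ shrink with $L$; either repair keeps all constants absolute.)
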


The proof of this theorem is constructive and algorithmic. Given $\eta$, we will, for each $L$,
provide a quantization alphabet $\sA$ which is an arithmetic progression of size $L$, and an algorithmic
quantization map $Q:\R^m \to \sA^m$ such that with high probability on $E$, 
there exists a certain dual
frame $F$, determined by $E$ and $Q$ and computable by a fast, low-complexity algorithm,
which yields $\sD(E,Q,F,B^k_2) \lesssim (m/k)^{3/2} k^{-1}  L^{-(1-\eta)m/k}$. The chosen $Q$ is a variation on vector-valued $\beta$-expansions and $F$ is an alternative dual of $E$ determined by $Q$. Suitably restricting the range of $m$ and $k$ and
summing over the failure probability for each $L$ will then allow us to guarantee the distortion bound \eqref{main-thm-bound} uniformly over $L$.

Quantization of finite frames by sigma-delta modulation goes back to \cite{benedetto2006sigma,BPY2}.
Alternative duals for frames were introduced in the context of sigma-delta quantization in \cite{LPY,alternative2010,blum2010sobolev}, and more recently have been effectively used for Gaussian random
frames in \cite{gunturk2010sobolev} and more generally for sub-Gaussian random frames in \cite{krahmer2013subGaussian}. For these frames,
an $r$-th order sigma-delta quantization scheme coupled with an associated Sobolev dual can achieve the bound
$\sDa(E,L) \lesssim L^{-1} (cr)^{2r} (m/k)^{-r}$ with high
probability for some constant $c$. 
Optimizing this bound for $r$ yields $\sDa(E,L) \lesssim L^{-1}e^{-\tilde c\sqrt{m/k}}$ for another  
constant $\tilde c$ which is sub-optimal in terms of its dependence on both the alphabet size $L$ and the oversampling factor $m/k$ and $L$. In contrast,
our result achieves near-optimal dependence on both parameters: Not only is the dependence on $m/k$ 
exponential, but the rate can also be made arbitrarily close to its maximum value $\log_2 L$.
To the best of our knowledge, this is the first result of this kind.\footnote{Similar exponential error bounds that have been obtained previously in the case of conventional sigma-delta modulation or for other quantization schemes are not compatible with the results of this paper: The results of \cite{gunturk2003one} and \cite{deift2011optimal} are for a fixed frame-like system, but using a different norm in infinite dimensions, and the dependence on $L$ is unavailable. The results in \cite{derpich2008}, obtained in yet another functional setting for sigma-delta modulation, come close to being optimal, however these results were obtained under modeling assumptions on the quantization noise and circuit stability. The exponential near-entropic error decay in the bitrate obtained in \cite{IwenSaab} combine sigma-delta modulation with further (lossy) bit encoding. Finally, the exponential error decay reported in \cite{BFNPW14} is obtained with adaptive hyperplane partitions and does not correspond to linear reconstruction.}

The paper is organized as follows: In Section \ref{section-distributed-quantization}, we review the notion of general noise-shaping quantizers and their stability as well as alternative duals of frames based on noise shaping. In this section we introduce a number of new notions, namely, $V$-duals and $V$-condensation of frames and the concept of {\em distributed noise shaping}. In Section \ref{section-beta-duals}, we introduce the special case of {\em beta duals} and derive a general bound for the analysis distortion of an arbitrary frame using beta duals. As an example, we show in this section that beta duals are optimal for harmonic semicircle frames. In Section \ref{gaussian_frames}, we bound the analysis distortion of Gaussian frames, prove Theorem \ref{T:MAIN} as well as some variations of it, and also discuss the extension of our results to sub-Gaussian frames.

This paper is the first in a series of papers (in progress) on distributed noise shaping. The follow-up papers will address extensions of the theory developed in this paper for frames to more general sampling scenarios, including compressive sampling and infinite dimensional systems.

\section{Quantization via distributed noise shaping} \label{section-distributed-quantization}

\subsection{Noise-shaping quantizers} \label{noise-shaping}

Let $\sA$ be a quantization alphabet and $J$ be a compact interval in $\R$.
Let $h = (h_j)_{j\geq0}$ be a given sequence, finite or infinite, and $h_0=1$.
By a noise-shaping quantizer with the transfer sequence $h$, we  
mean any map $Q:J^m \to \sA^m$ where $q := Q(y)$ satisfies
\begin{equation*}
y - q = h*u
\end{equation*}
and $\|u\|_\infty \leq C$ for some constant $C$ which is independent of $m$.
Here $h*u$ refers to the convolution of $h$ and $u$ defined by
$(h*u)_n := \sum h_j u_{n-j}$ where it is assumed that $u_n := 0$ for $n\leq 0$.

Noise-shaping refers to the fact that
the ``quantization noise'' $y-q$ cannot be arbitrary.
Even though the operator
$H:u \mapsto h*u$ is invertible on $\R^m$; the requirement that 
$\|u\|_\infty$ must be controlled uniformly in $m$ imposes restrictions on what $q$ can be
for a given $y$.

The above formulation amounts to saying that $H$ is an $m\times m$ 
lower triangular Toeplitz matrix defined by $h$, in particular with unit diagonal.
Let us relax the notion of a noise-shaping quantizer and assume that $H$ is any given $m\times m$
lower triangular matrix with unit diagonal. We will refer to $H$ as the noise transfer operator and 
keep the noise-shaping relation
\begin{equation} \label{y-q-H-u}
  y - q = Hu,
\end{equation}
but since $H$ is more general and $m$ is therefore static, we can no longer insist that 
$\|u\|_\infty$ is controlled in the same manner. While we will mostly stick to convolution-based
noise-shaping as defined at the beginning, the above relaxation is conceptually and algebraically 
simpler.

Noise-shaping quantizers do not exist unconditionally. However, 
under certain suitable assumptions on $H$ and $\sA$, they exist and 
can be implemented via recursive algorithms. The simplest
is the greedy quantizer whose general formulation is given below:

\begin{proposition}\label{P:greedy}
Let $\sA := \sA_{L,\delta}$ denote the arithmetic progression in $\R$ which is of length $L$, spacing $2\delta$, and symmetric about $0$.
Assume that $H = I - \tilde H$, where $\tilde H$ is strictly lower triangular, and $\mu \geq 0$ such that 
$\|\tilde H\|_{\infty \to \infty} + \mu/\delta \leq L$. Then 
there exists a quantization map $Q_{H,\sA}:[-\mu,\mu]^m \to \sA^m$ implementable by a recursive
algorithm such that $q:=Q_{H,\sA}(y)$ satisfies \eqref{y-q-H-u}
and $\|u\|_\infty \leq \qdel$.
\end{proposition}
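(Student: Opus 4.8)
The plan is to construct $q$ and the auxiliary sequence $u$ simultaneously, component by component, in increasing order of the index $n$, using the greedy rule: at step $n$, having already determined $q_1,\dots,q_{n-1}$ (equivalently $u_1,\dots,u_{n-1}$), choose $q_n \in \sA$ to be the nearest alphabet point to the quantity $v_n := y_n - (\tilde H u)_n$, where the latter is well-defined because $\tilde H$ is strictly lower triangular and hence $(\tilde H u)_n$ depends only on $u_1,\dots,u_{n-1}$. Then set $u_n := v_n - q_n$, which is exactly the rearrangement of \eqref{y-q-H-u} at coordinate $n$ once one writes $y - q = (I - \tilde H)u = u - \tilde H u$, i.e. $u_n = y_n - q_n + (\tilde H u)_n$; solving for $q_n$ shows $u_n = v_n - q_n$. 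So the recursion is consistent and by construction produces vectors satisfying \eqref{y-q-H-u}.

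The substance is the bound $\|u\|_\infty \le \delta$, which I would prove by induction on $n$. The inductive hypothesis is $|u_j| \le \delta$ for all $j < n$. Under this hypothesis, $|(\tilde H u)_n| \le \|\tilde H\|_{\infty \to \infty}\,\|u\|_\infty \le \|\tilde H\|_{\infty \to \infty}\,\delta$, using the row-sum characterization of the $\infty \to \infty$ operator norm together with the fact that the relevant entries of $u$ are all the ones already controlled. Combined with $|y_n| \le \mu$ (since $y \in [-\mu,\mu]^m$), we get $|v_n| \le \mu + \|\tilde H\|_{\infty\to\infty}\,\delta$. Now the key arithmetic fact about $\sA = \sA_{L,\delta}$: it is an arithmetic progression of spacing $2\delta$ symmetric about $0$, so it covers the interval $[-(L-1)\delta, (L-1)\delta]$ with the property that every real number within distance $(L-1)\delta$ of the interval's center-set is within $\delta$ of some alphabet point; more precisely, for any $t$ with $|t| \le (L-1)\delta$ there is a point of $\sA$ within $\delta$ of $t$. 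Hence $|u_n| = |v_n - q_n| \le \delta$ provided $|v_n| \le (L-1)\delta$. So it suffices to check $\mu + \|\tilde H\|_{\infty\to\infty}\,\delta \le (L-1)\delta$, i.e. $\|\tilde H\|_{\infty\to\infty} + \mu/\delta \le L - 1$.

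I should double-check the endpoint bookkeeping, since the hypothesis as stated is $\|\tilde H\|_{\infty\to\infty} + \mu/\delta \le L$ rather than $\le L-1$. The resolution is that the greedy quantizer can actually round to within $\delta$ of $t$ for all $t$ in the slightly larger range $[-L\delta, L\delta]$ if one is careful at the two extreme alphabet points, or alternatively the cleaner statement is that the reachable range is $[-(L\!-\!1)\delta,(L\!-\!1)\delta]$ but $|y_n|$ can be taken $\le \mu$ with a strict-versus-nonstrict adjustment; in any case one verifies that $|v_n| \le L\delta$ implies the existence of $q_n \in \sA$ with $|v_n - q_n| \le \delta$ by a direct case analysis on whether $v_n$ exceeds the largest element of $\sA$, lies below the smallest, or falls in between (in the middle case the spacing $2\delta$ gives distance at most $\delta$; in the extreme cases $|v_n| \le L\delta$ together with the largest element being $(L-1)\delta$ gives distance at most $\delta$). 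This case analysis is the one genuinely fiddly step, and getting the constant exactly right so that the hypothesis $\|\tilde H\|_{\infty\to\infty} + \mu/\delta \le L$ (not $L-1$) suffices is the main obstacle; everything else is the clean induction above. Finally, since the construction is manifestly a finite recursion — each $q_n$ obtained by one nearest-point lookup in $\sA$ and one evaluation of a partial convolution/matrix-vector product — the map $Q_{H,\sA}$ is implementable by a recursive algorithm as claimed.
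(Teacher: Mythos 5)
Your proof is correct and is essentially identical to the paper's: the same greedy rounding rule, the same induction on $|u_j|\le\delta$, and the same key observation that since the endpoints of $\sA_{L,\delta}$ are $\pm(L-1)\delta$, rounding to the nearest alphabet point incurs error at most $\delta$ on the whole interval $[-L\delta,L\delta]$, which is exactly what makes the hypothesis $\|\tilde H\|_{\infty\to\infty}+\mu/\delta\le L$ (rather than $L-1$) suffice. The only blemish is a sign typo in your definition $v_n := y_n - (\tilde H u)_n$, which should read $v_n := y_n + (\tilde H u)_n$ to be consistent with your own rearrangement $u_n = y_n - q_n + (\tilde H u)_n$; this does not affect the argument.
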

\begin{proof}
Let $\mathrm{round}_\sA:\R\to \sA$ be any rounding function satisfying $|w - \mathrm{round}_\sA(w)| \leq |w-a|$ for all $a\in\sA$ and $w \in \R$.
Since $\sA = \{(-L+2l-1)\delta : 1\leq l \leq L \}$, we have $|w - \mathrm{round}_\sA(w)| \leq \delta$ for all $|w| \leq L\delta$.
For $n=1,\dots,m$, we set 
\begin{equation*}
q_n := \mathrm{round}_\sA\left (y_n + \sum_{j=1}^{n-1} \tilde H_{n,n-j} u_{n-j} \right) 
\end{equation*}
and
\begin{equation*}
u_n := y_n + \sum_{j=1}^{n-1} \tilde H_{n,n-j} u_{n-j} -q_n .
\end{equation*}
The proof now follows by induction. Setting $w=y_1$, we obtain $|u_1| \leq \delta$.
Assuming that $|u_j|\leq \delta$ for $j=1,\dots,n-1$, we obtain
\begin{equation*}
\left |y_n + \sum_{j=1}^{n-1} \tilde H_{n,n-j} u_{n-j} \right| 
\leq \mu + \delta \|\tilde H\|_{\infty \to \infty} \leq L\delta.
\end{equation*}
Hence $|u_n| \leq \delta$.
\end{proof}
\begin{remark}\label{greedyconv}
{\rm 
In the special case $Hu = h*u$ where $h_0 = 1$, note that $\|\tilde H\|_{\infty \to \infty} = \|h\|_1 - 1$. This basic fact will be invoked in Section \ref{section-beta-duals}.
}
\end{remark}

\subsection{Alternative duals of frames for noise shaping} \label{V-duals}
Let $E$ be a frame and $y=Ex$ be the frame measurements of a given signal $x$. Assume that
we quantize $y$ using a noise-shaping quantizer with transfer operator $H$.
Any left-inverse (dual) $F$ of $E$ gives
\begin{equation*}
x - Fq = F(y - q) = FHu. 
\end{equation*}

As explained in the introduction, the choice of $F$ is to be made based upon $E$ and the quantizer 
$Q_{H,\sA}$, but not $u$ which is unavailable to the decoder.
We are interested in bounding $\|x - Fq\|_2$, but the best a priori bound for $u$ is 
in the $\infty$-norm. This suggests that we employ the bound
\begin{equation}\label{error_bound_FHu}
 \|x - Fq \|_2 \leq \|FH\|_{\infty \to 2} \|u \|_\infty.
\end{equation}
With this bound, the natural objective would be to minimize $\|FH\|_{\infty \to 2}$ over all duals of $E$.
However, an analytical solution for this problem is not available to us. If instead we minimize 
$\|FH\|_\mathrm{2 \to 2}$, then the solution is known explicitly, and given by 
$F = (H^{-1}E)^\dagger H^{-1}$,
where $A^\dagger := (A^\ast A)^{-1} A^\ast$ is the pseudoinverse \cite{blum2010sobolev,gunturk2010sobolev}. 

More generally, for any $p \times m$ matrix $V$ such that $VE$ is also a frame for $\R^k$ 
(i.e. of rank $k$),  
\begin{equation*}
F_V := (VE)^\dagger V
\end{equation*}
is a dual of $E$, which we will call the {\em $V$-dual} of $E$. (For the case $V = H^{-1}$, 
the resulting dual was called the $H$-dual in \cite{gunturk2010sobolev} and denoted by $F_H$.
With our more general definition, this dual more accurately becomes the $H^{-1}$-dual 
and denoted by $F_{H^{-1}}$.) 
The particular case $V = D^{-r}$, called the ``Sobolev dual'', 
was introduced in \cite{blum2010sobolev} and studied for random $E$ in \cite{gunturk2010sobolev}.

With a $V$-dual, we have $F_V H = (VE)^\dagger VH$. Given that $\| (VE)^\dagger \|_{2\to 2}
 = 1/\sigmin( VE)$, we employ the initial bound
\begin{equation*}
\|F_V H\|_{\infty \to 2} \leq \frac{\| VH\|_{\infty \to 2} }{\sigmin( VE) }. 
\end{equation*}
Note that this bound is still valid when $VE$ fails to be a frame, since then we have $\sigmin(VE)=0$.

Let us compare the following two routes for bounding $\| VH\|_{\infty \to 2}$:
\begin{itemize}
 \item[(i)]  $\| VH\|_{\infty \to 2} \leq \sqrt{m}\| VH\|_{2 \to 2} $. Since
$\| VH  \|_\mathrm{2 \to 2}/\sigmin( VE) \geq \|F_V H\|_{2\to2} \geq 1/\sigmin(H^{-1}E)$, it follows
that the minimum value of $\| VH  \|_\mathrm{2 \to 2}/\sigmin( VE)$ is attained when $V = H^{-1}$. 
Hence, this route does not offer a new bound.
 \item[(ii)]  $\| VH\|_{\infty \to 2} \leq \sqrt{p}\| VH\|_{\infty \to \infty} $. The minimization of
$\sqrt{p} \| VH  \|_\mathrm{\infty \to \infty}/\sigmin( VE)$ opens up 
the possibility of a smaller error bound because $V = H^{-1}$ readily achieves the same bound
as in (i).\footnote{The following example shows that the minimum value can be 
strictly smaller: Let 
$H = \left[\begin{matrix}1&0\\-1&1\end{matrix}\right]$, $E =\left[\begin{matrix}1\\1\end{matrix}\right] $
for which $\sqrt{m}/\sigmin(H^{-1}E) = \sqrt{2/5}$. Meanwhile, 
$V = \left[\begin{matrix}1&1\end{matrix}\right]$ yields
$\sqrt{p}\|VH\|_{\infty \to \infty}/\sigmin(VE) = 1/2$.}
Let $v_1,\dots,v_p$ denote the rows of $V$. Note that 
\begin{equation*}
\| VH  \|_{\infty \to \infty} =  \max_{1 \leq n \leq p} \| v_n H \|_1.
\end{equation*}
Hence an effective strategy for designing $V$ is based on choosing $v_1,\dots,v_p$ near (but not
necessarily equal to) the 
bottom $p$ left singular vectors of $H$. We will exploit this principle in the next section.
\end{itemize}

Note that different choices of $V$ can yield the 
same $V$-dual. In particular, given any $V$, the $k \times m$ matrix 
$ \tilde V:= (W^*W)^{-1/2}W^*V $ yields $F_{\tilde V} = F_V$, where $W := VE$. 
This $\tilde V$ is not particularly useful for computational purposes, but it shows that 
$p$ can be chosen as small as $k$ without sacrificing performance.
In general, when $p < m$, we will call $VE$ the {\em $V$-condensation} of $E$.

\subsection{Distributed noise shaping} \label{distributed}

By a {\em distributed} noise-shaping quantization scheme, we will mean a noise-shaping quantizer defined by
a block-diagonal noise transfer matrix $H$, and for any given analysis frame $E$, an associated $V$-dual where
$V$ is also block-diagonal with matching sized blocks. More specifically, we set
\begin{equation*}
H=\left[\begin{matrix} H_1  & & \\ & \ddots & \\ & & H_l \end{matrix}\right]
\quad\text{and}\quad
V=\left[\begin{matrix} V_1  & & \\ & \ddots & \\ & & V_l \end{matrix}\right]
\end{equation*}
where $H_i\in\R^{m_i\times m_i}$ and $V_i\in\R^{p_i\times m_i}$ with $\sum_i m_i=m$ and $\sum_i p_i=p$.
We further decompose $E$, $y$, $q$ and $u$ as
\begin{equation*}
E = \left[\begin{matrix}E_1 \\ \vdots \\ E_l \end{matrix}\right],\quad
y = \left[\begin{matrix}y_1 \\ \vdots \\ y_l \end{matrix}\right],\quad
q = \left[\begin{matrix}q_1 \\ \vdots \\ q_l \end{matrix}\right]
\quad\text{and}\quad
u = \left[\begin{matrix}u_1 \\ \vdots \\ u_l \end{matrix}\right],
\end{equation*}
where $E_i \in \R^{m_i \times k}$, and $y_i$, $q_i$, $u_i \in \R^{m_i}$, $i \in [l]:=\{1,\dots,l\}$.
Note that the $E_i$ may or may not be individual frames in $\R^k$. With this notation, we have
$l$ individual quantizers that run in parallel:
\begin{equation}\label{y-q-H-u-i}
E_i x - q_i = H_i u_i,\qquad i \in [l].
\end{equation}
Let $v_{i,j}$ denote the $j$th row of $V_i$, $i \in [l]$, $j\in[p_i]$. Then we have
\begin{equation*}
\| VH  \|_{\infty \to \infty} =  \max_{i \in [l]} \max_{j \in [p_i]} \| v_{i,j} H_i \|_1.
\end{equation*}
With this expression it follows from the matrix-norm bound
\begin{equation*}
\|F_V H\|_{\infty \to 2} \leq \frac{\sqrt{p} \| VH\|_{\infty \to \infty}}{\sigmin( VE)}
\end{equation*}
derived in Section \ref{V-duals} (via route (ii)) for $V$-duals and the general noise-shaping error bound \eqref{error_bound_FHu} that 
\begin{equation*}
\|x - F_V q\|_2 \leq \frac{\sqrt{p} \|u \|_\infty}{\sigmin(VE)} 
 \max_{i \in [l]} \max_{j \in [p_i]} \|v_{i,j} H_i \|_1.
\end{equation*}

\begin{remark}
{\rm
For conceptual clarity, we may assume that $l \not= 1$ so that distributed noise shaping is genuinely a special case of noise shaping as described in Section \ref{V-duals}. However, we will not need to turn this into an explicit assumption because we will only consider $l \geq k$, and most often, $l > k$.
}
\end{remark}

\begin{remark}
{\rm 
Our formalism for noise-shaping quantization and alternate-dual reconstruction assumes
that the frame vectors and the measurements are processed in the given (specified) order. 
For distributed noise shaping, we assume that these vectors are collected into groups according to the same ordering. 
}
\end{remark}

\begin{remark}
{\rm 
It is possible to view distributed noise-shaping quantization as a generalization of parallel sigma-delta modulation introduced in the analog-to-digital conversion literature \cite{GJ}.
}
\end{remark}

\section{Beta duals}\label{section-beta-duals}

We now propose a specific distributed noise-shaping quantization scheme and corresponding
alternative duals for reconstruction which will form the basis for the main results of this paper.

For any $\beta \geq 1$, let $h^\beta$ be the (length-$2$)
sequence given by $h^\beta_0 = 1$ and $h^\beta_1 = -\beta$. This paper is about the implications of setting $\beta > 1$, but the case $\beta = 1$ is permissible, too.

Given $k \leq l \leq m$, let $\bfm:=(m_1,\dots,m_l)$ and $\bfbeta:=(\beta_1,\dots,\beta_l)$ be such that
$\sum m_i = m$ and $\beta_i \geq 1$, $i \in [l]$. For each $i$, we set $H_i$
equal to the convolution operator $u \mapsto h^{\beta_i}*u$, $u \in \R^{m_i}$, as defined in
Section \ref{noise-shaping}.
Note that each $H_i$ is an $m_i \times m_i$ bidiagonal matrix with 
diagonal entries equal to $1$ and the subdiagonal entries equal to $-\beta_i$. As in Remark \ref{greedyconv}, we have $\|\tilde H_i \|_{\infty \to \infty} = \beta_i$.

We set $p_i = 1$ and
$V_i =  v_{i,1} := [\beta_i^{-1} ~~ \beta_i^{-2}~\cdots~~\beta_i^{-m_i}]$, $i \in [l]$.
Let $V =: V_{\sbfbeta,\sbfm}$ and $H$ be as in Section \ref{distributed}. Note that $p=l$. It follows that 
$V_i H_i = [0~\cdots~0~~\beta_i^{-m_i}]$, 
$\|V_i H_i\|_{\infty \to \infty} = \beta_i^{-m_i}$, and therefore 
$\|VH\|_{\infty \to \infty} = \beta_*^{-m_*}$, where $\beta_* := \min_i \beta_i$ and $m_* := \min_i m_i$.
With the setup and the notation of Section \ref{distributed},
it follows that 
\begin{equation} 
\|x - F_{V_{\sbfbeta,\sbfm}} q\|_2 \leq 
\frac{\sqrt{l} \|u\|_\infty}{\sigmin(V_{\sbfbeta,\sbfm}E)} \beta_*^{-m_*}.
\label{E:bdest}
\end{equation} 
Also note that we have $\|\tilde H \|_{\infty \to \infty} = \max_i \beta_i$. 

Note that this particular choice of $V_i = v_{i,1}$ is derived from 
$\beta$-expansions of real numbers. In fact, solving the 
difference equations \eqref{y-q-H-u-i} for $q_i$ and recovering with
$v_{i,1}$ gives a truncated $\beta_i$-expansion of $v_{i,1}E_ix$. (Here we assume the $\beta_i > 1$.) For theory and applications of $\beta$-expansions, see \cite{Parry, Dajani, DDGV}. 

We will refer to $V_{\sbfbeta,\sbfm}E$ as the {\em $(\bfbeta,\bfm)$-condensation} of $E$, and 
the corresponding $V_{\sbfbeta,\sbfm}$-dual $F_{V_{\sbfbeta,\sbfm}}$ as the 
{\em $(\bfbeta,\bfm)$-dual} of $E$. We will use the term {\em beta dual} when implicitly referring to 
any such dual. The case $\beta_i=1$ corresponds to sigma-delta modulation within the $i$th block, but the duals considered in this paper will differ from Sobolev duals and therefore we will continue to use the term beta dual.

In a typical application all the $\beta_i$ would be set equal to some $\beta = \beta_*$ and the $m_i$
(approximately) equal to $m/l$. To denote this choice, we will replace the vector pair $(\bfbeta, \bfm)$
by the scalar triplet $(\beta, m, l)$, ignoring the exact values of the $m_i$ as 
long as $m_* \geq \lfloor m/l \rfloor$.

\subsection{Bounding the analysis distortion via beta duals}
\label{betadualsteps}

For any integer $L \geq 2$ and scalar $\mu > 0$, let 
\begin{equation*}
 S_{\mu,L}:= \{(\beta,\delta) : \beta \geq 1, ~\delta > 0, ~\beta + \mu/\delta \leq L \}.
\end{equation*}
This set corresponds to the admissible set of $\beta$ and $\delta$ values such that the greedy 
quantization rule described in Proposition \ref{P:greedy} and Remark \ref{greedyconv} using $\sA_{L,\delta}$ and a noise 
shaping operator $H$ as defined above 
(with all $\beta_i = \beta$) is stable with $\|u\|_\infty \leq \delta$ uniformly for 
all inputs bounded by $\mu$. Note that the best $\ell_\infty$-norm
bound for $y = Ex$ over all $x \in B^k_2$ is by definition equal to $\| E \|_{2 \to \infty}$. 
Combining \eqref{E:bdest} with \eqref{DEQFX}, \eqref{DaEAXeq} and \eqref{DaEL}, we 
immediately obtain the following result:
\begin{proposition} \label{analysis_dist}
Let $E$ be an $m \times k$ frame and $L \geq 2$. For any 
$\mu \geq \| E \|_{2 \to \infty}$, $(\beta, \delta) \in S_{\mu,L}$, and $k \leq l \leq m$, 
we have
\begin{equation}\label{DaEL_bound1}
 \sDa(E,L) \leq 
 \frac{\delta \sqrt{l} \beta^{-\lfloor m/l\rfloor}}{\sigmin(V_{\beta,m,l}E)}.
\end{equation}
\end{proposition}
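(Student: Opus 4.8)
The plan is to assemble Proposition~\ref{analysis_dist} entirely from pieces already in hand, so the work is purely organizational. Fix a frame $E$ and $\mu \geq \|E\|_{2\to\infty}$, and fix $L \geq 2$. For any admissible pair $(\beta,\delta) \in S_{\mu,L}$ and any integer $l$ with $k \leq l \leq m$, I set up the distributed beta-dual scheme of Section~\ref{section-beta-duals} with all $\beta_i = \beta$, block sizes $m_i$ chosen so that $m_* = \min_i m_i \geq \lfloor m/l\rfloor$ (possible since $\sum m_i = m$ and there are $l$ blocks), and $\mathscr{A} = \mathscr{A}_{L,\delta}$. The first step is to verify that the greedy quantizer of Proposition~\ref{P:greedy} applies: here $H$ is block-diagonal with bidiagonal blocks, so $\|\tilde H\|_{\infty\to\infty} = \beta - 1$, and the stability hypothesis $\|\tilde H\|_{\infty\to\infty} + \mu/\delta \leq L$ reads $(\beta-1) + \mu/\delta \leq L$, which is implied by $\beta + \mu/\delta \leq L$, i.e.\ by $(\beta,\delta) \in S_{\mu,L}$. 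Since $\mu \geq \|E\|_{2\to\infty} = \sup_{x\in B_1}\|Ex\|_\infty$, every measurement vector $y = Ex$ with $x \in B_1$ lies in $[-\mu,\mu]^m$, so Proposition~\ref{P:greedy} produces $q = Q_{H,\mathscr{A}}(y) \in \mathscr{A}^m$ with $y - q = Hu$ and $\|u\|_\infty \leq \delta$.

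The second step is to invoke the already-derived error bound \eqref{E:bdest} for this scheme, which with $\beta_* = \beta$ and $\|u\|_\infty \leq \delta$ and $p = l$ gives
\begin{equation}
\|x - F_{V_{\sbfbeta,\sbfm}} q\|_2 \leq \frac{\delta \sqrt{l}\, \beta^{-m_*}}{\sigmin(V_{\sbfbeta,\sbfm}E)} \leq \frac{\delta \sqrt{l}\, \beta^{-\lfloor m/l\rfloor}}{\sigmin(V_{\beta,m,l}E)},
\end{equation}
where in the last step I use $m_* \geq \lfloor m/l\rfloor$ (so $\beta^{-m_*} \leq \beta^{-\lfloor m/l\rfloor}$ since $\beta > 1$) together with the convention that $V_{\beta,m,l}$ denotes precisely this choice of block data. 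One caveat to check: if $V_{\beta,m,l}E$ fails to have rank $k$ then $\sigmin(V_{\beta,m,l}E) = 0$ and the bound is vacuously true (infinite), so the inequality holds unconditionally; the remark after \eqref{E:vdual}-type reasoning in Section~\ref{V-duals} covers this. The $V$-dual $F_{V_{\beta,m,l}}$ depends only on $E$ and the scheme (through $V$), not on $x$ or $u$, so it is a legitimate decoder.

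The third and final step is to translate this uniform-over-$x$ bound into a statement about $\sDa$. Taking the supremum over $x \in B_1$ and recalling \eqref{DEQFX}, we have $\sD(E, Q_{H,\mathscr{A}}, F_{V_{\beta,m,l}}, B_1) \leq \delta\sqrt{l}\,\beta^{-\lfloor m/l\rfloor}/\sigmin(V_{\beta,m,l}E)$. Since $F_{V_{\beta,m,l}}E = I$ (it is a $V$-dual), \eqref{DaEAXeq} gives $\sDa(E,\mathscr{A}_{L,\delta}, B_1) \leq \sD(E, Q_{H,\mathscr{A}}, F_{V_{\beta,m,l}}, B_1)$, and since $|\mathscr{A}_{L,\delta}| = L$, the definition \eqref{DaEL} yields $\sDa(E,L) \leq \sDa(E,\mathscr{A}_{L,\delta},B_1)$. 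Chaining these and then taking the infimum over all admissible $(\beta,\delta) \in S_{\mu,L}$ and all $l \geq k$ (with $l \leq m$ implicit, since blocks must be nonempty) produces exactly \eqref{DaEL_bound1}. There is no real obstacle here — the only points requiring a moment's care are matching the stability condition $(\beta-1)+\mu/\delta \leq L$ against the definition of $S_{\mu,L}$ (a harmless slack of $1$), and handling the degenerate case $\sigmin = 0$; everything else is bookkeeping over the inequalities \eqref{dist-relations-1}, \eqref{E:bdest}, and the definitions.
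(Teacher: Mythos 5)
Your proof is correct and follows essentially the same route as the paper, whose own ``proof'' consists of the single remark that the result is immediate upon combining \eqref{E:bdest} with \eqref{DEQFX}, \eqref{DaEAXeq} and \eqref{DaEL} --- you have simply written out that bookkeeping, including the stability check and the degenerate case $\sigmin(V_{\beta,m,l}E)=0$. One small slip worth noting: $\|\tilde H\|_{\infty\to\infty} = \beta$, not $\beta-1$ (the remark after Proposition~\ref{P:greedy} gives $\|h\|_1 - 1 = (1+\beta)-1 = \beta$ for $h^\beta=(1,-\beta)$), so the condition $\beta + \mu/\delta \leq L$ defining $S_{\mu,L}$ is exactly the stability hypothesis rather than carrying a ``harmless slack of $1$''; this mischaracterization does not affect the validity of your argument, since the hypothesis you actually assume is the correct one.
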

In order to bound $\sDa(E,L)$ effectively via this proposition, we will
do the following:
\begin{enumerate}
\item[(1)] Find an upper bound $\mu$ for $\|E\|_{2\to \infty}= \displaystyle \max_{1 \leq i \leq m} \|e_i\|_2$. Note that $\mu_1 \leq \mu_2$ implies $S_{\mu_2,L} \subset S_{\mu_1,L}$, therefore the smallest value of $\mu$ would result in the largest search domain for $(\beta,\qdel)$.
\item[(2)] Find a lower bound on $\sigmin(V_{\beta,m,l}E)$ in terms of $\beta \geq 1$ and (if chosen as an optimization parameter) $l$.
\item[(3)] Choose optimal (or near-optimal) admissible values of  $(\beta,\qdel)$ and (if chosen as an optimization parameter) $l$ for the resulting effective upper bound replacing \eqref{DaEL_bound1}. 
\end{enumerate}

The following simple lemma will be useful when choosing $(\beta,\qdel)$.

\begin{lemma}\label{calculus_lemma}
For any $L \geq 2$, $\mu > 0 $, and $\alpha \geq 0$, let $\beta = L(\alpha+1)/(\alpha+2)$ and $\delta=\mu(\alpha+2)/L$. Then $(\beta,\delta) \in S_{\mu,L}$ and
\begin{equation}\label{calculus_lemma_ineq}
\delta \beta^{-\alpha} < \mu e (\alpha+1) L^{-(\alpha+1)}. 
\end{equation}
\end{lemma}
\begin{proof}
We have $\beta \geq 1$, $\delta > 0$, and $\beta + \mu/\delta = L$ so that $(\beta,\delta) \in S_{\mu,L}$. (In fact, we have $\beta > 1$ unless $\alpha = 0$ and $L = 2$.) With these values, we obtain $\delta \beta^{-\alpha} = \mu (1+\frac{1}{\alpha+1})^{\alpha+1}(\alpha+1)L^{-(\alpha+1)}$ and \eqref{calculus_lemma_ineq} follows by noting that $(1+1/t)^t < e$ for all $t > 0$.
\end{proof}

\begin{remark}{\rm 
As it is evident from Proposition \ref{analysis_dist}, the above lemma will be used in the regime $\alpha \approx m/k$. As the number of measurements $m$ is increased, the value of $\delta$ chosen in Lemma \ref{calculus_lemma} would increase linearly with $m$, which may not be feasible in a practical circuit implementation due to dynamic range (or power) limitations. Note that the increase in $\delta$ is actually caused by the fact that the chosen value of $\beta$ approaches $L$ as $\alpha$ is increased. These values were chosen in Lemma \ref{calculus_lemma} to (near)-optimize $\delta \beta^{-\alpha}$ subject to the constraint $(\beta,\delta) \in S_{\mu,L}$. If instead a more modest, fixed value of $\beta$, such as $\beta = 2L/3$, is employed, then it suffices to set $\delta = 3\mu/L$, which is only weakly dependent on $m$ (possibly through $\mu$, depending on the nature of measurements). 
This sub-optimal choice would increase the analysis distortion bound in \eqref{calculus_lemma_ineq} roughly by a factor of $(3/2)^\alpha$ which may still be affordable in practice due to the presence of the dominating term $L^{-\alpha}$.
}
\end{remark}

As an example of the above procedure for bounding the analysis distortion of frames, we will discuss in the next subsection a specific family of (deterministic) frames, namely the harmonic semicircle frames in $\R^2$. In Section 4, we will reformulate this procedure for the setting of random frames.

\subsection{Example: the analysis distortion for the harmonic semicircle frames}

Harmonic semicircle frames  are obtained from harmonic frames by removing some of their symmetry (see \cite{framepath2007} for the general
definition). In 
$\R^2$, the harmonic semicircle frame of size $m$ is defined by the $m\times 2$ matrix 
$E:=E_{\mathrm{hsc},m}$ given by
\begin{equation*}
E_{\mathrm{hsc},m}:= \left[\begin{matrix} 
\cos \pi/m & \sin \pi/m \\
\cos 2\pi/m & \sin 2\pi/m \\
\vdots & \vdots \\
\cos \pi & \sin \pi 
\end{matrix}\right]
\end{equation*}

In Figure \ref{fig:bdukcondtwo} we illustrate the harmonic semicircle frame of size $m=12$ with its $(\beta,m,l)$-duals for $l=2$ and $l=3$. For these plots, we have used $\beta = 1.6$.

\begin{figure}[htp]
\centering
\includegraphics[scale=0.5]{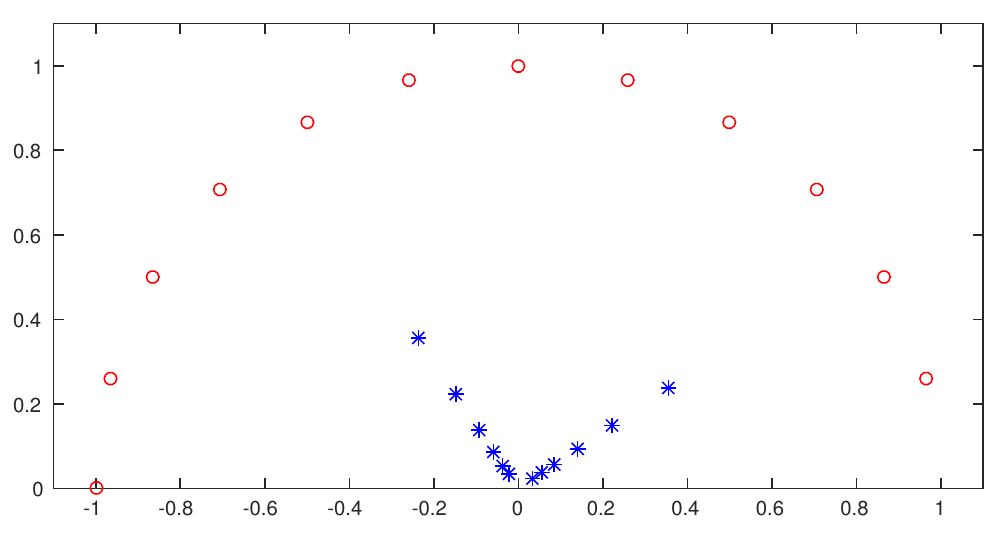}\\
\includegraphics[scale=0.5]{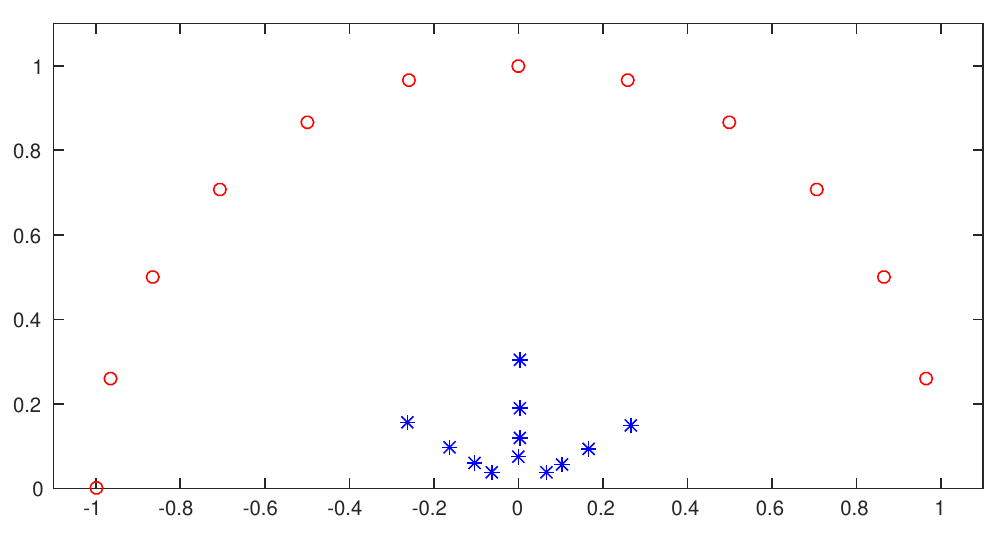}\\
\caption{Harmonic semicircle frame (circles) of size $m=12$ and two of its $(\beta,m,l)$-duals (stars) for $\beta=1.6$. Top: $l=2$, bottom: $l=3$.}
\label{fig:bdukcondtwo}
\end{figure}

To bound the analysis distortion of the harmonic semicircle frames, we start by setting $\mu=1$ since all the frame vectors are of unit norm. 

We will work with the choice $l=2$, and for simplicity, assume that $m$ is even. Let $E_1$ and $E_2$ be the two halves of $E$ as in Section \ref{distributed}.

For any $\beta > 1$, consider $V_{\beta,m,2}E$, which is given by the $2 \times 2$ matrix
\begin{equation*}
\left[\begin{matrix} v_\beta E_1 \\ v_\beta E_2 \end{matrix}\right] 
\end{equation*}
where $v_\beta := V_1 = V_2 = [\beta^{-1} \cdots \beta^{-m/2} ]$.
Note that each row of $E_2$ is a 90 degree rotation of the corresponding row of $E_1$. 
Therefore $ v_\beta E_2 $ is also a 
90 degree rotation of $ v_\beta E_1 $ which implies that  
$\sigmin (V_{\beta,m,2}E) = \|v_\beta E_1 \|_2$. An exact 
expression for $\|v_\beta E_1 \|_2$ follows easily using Euler's formula:
\begin{equation*} 
\|v_\beta E_1 \|_2
= \left | \sum_{k=1}^{m/2} \beta^{-k} e^{\pi i k/m} \right |
= \frac{(1+\beta^{-m})^{1/2}}{|\beta-e^{\pi i /m}|}.
\end{equation*}
It can now be deduced from this expression that $\|v_\beta E_1 \|_2 \geq \beta^{-1}$ for all $m$. Indeed, for $m=2$ we have equality, and for all $m\geq 3$ (which in turn implies $m\geq 4$), we have $|\beta-e^{\pi i /m}| \leq (\beta -1) + |1 - e^{\pi i /m}| \leq \beta$. Therefore Proposition \ref{analysis_dist} yields
\[ \sDa(E_{\mathrm{hsc},m},L) \leq \sqrt{2} \delta \beta^{-m/2+1} \]
which is valid for all $(\beta,\qdel) \in S_{1,L}$. With $\alpha:=m/2-1$, Lemma \ref{calculus_lemma} now implies
\[ \sDa(E_{\mathrm{hsc},m},L) < \frac{e}{\sqrt{2}} \,mL^{-m/2}. \]

\section{Analysis distortion of random frames} \label{gaussian_frames}
In this section, we assume that $E$ is an $m \times k$ random matrix with entries drawn independently from a given probability distribution. The main results we highlight will be for the standard Gaussian distribution $\mathcal{N}(0,1)$, though we will also discuss what can be achieved by the same methods for sub-Gaussian distributions. As before, $L$ stands for a given alphabet size. To bound the analysis distortion $\sDa(E,L)$, we will again follow the 3-step plan based on beta duals as outlined in section \ref{betadualsteps}, except the steps 1 and 2 are available only with probabilistic guarantees. The selection of the parameters $\beta$ and $l$ in step 3 will affect the level of probabilistic assurance with which the resulting distortion bound will hold.

Note that the entries of the $l \times k$ matrix $V_{\beta, m, l} E$ are mutually independent and the entries of the $i$th row are identically distributed according to $\mathcal{N}(0,\sigma^2_{m_i})$, where 
\begin{equation*}
\sigma^2_n := \sum_{j=1}^{n} \beta^{-2j},
\end{equation*}
and the $m_i$ satisfy $\sum m_i = m$ and $m_i \geq \lfloor m/l \rfloor$ for all $i$. Our analysis of $\sigmin(V_{\beta, m, l} E)$ would be simpler if all the $\sigma_{m_i}$ were equal, which holds if and only if $m$ is a multiple of $l$. However this condition cannot be enforced in advance. To circumvent this issue, we define for each $l$, the submatrix $\tilde E_l$ of $E$ formed by the first $\tilde m$ rows of $E$ where $\tilde m := \lfloor m/l \rfloor l$. Provided that $\tilde E_l$ is a frame for $\R^k$ (which holds almost surely if $l\geq k$), the definition of the analysis distortion $\sDa(E,\sA,\sX)$ given by \eqref{analysis-dist} implies that 
\begin{equation*} 
\sDa(E,\sA,\sX)  \leq \sDa(\tilde E_l,\sA,\sX), 
\end{equation*}
because for any $\tilde F$ such that $\tilde F \tilde E_l = I$, the zero-padded $k \times m$ matrix $F:= [\tilde F ~ 0]$ satisfies $FE = I$ and 
\begin{equation*}
\sD_{\mathrm{s}}(F,\sA,\sX) = \sD_{\mathrm{s}}(\tilde F,\sA,\sX). 
\end{equation*}
Hence we can replace \eqref{DaEL_bound1} with
\begin{equation}\label{DaEL_bound2}
 \sDa(E,L) \leq \frac{\delta \sqrt{l} \beta^{-\lfloor m/l\rfloor}}{\sigmin(V_{\beta,\tilde m,l} \tilde E_l)}.
\end{equation}
where again $(\beta,\qdel) \in S_{\mu,L}$ and $k \leq l \leq m$ are to be chosen suitably.

\subsection{Upper bound for $\|E\|_{2\to \infty}$}
Since the Gaussian distribution has unbounded support, there is no absolute bound on $\|E\|_{2\to \infty}$ that is applicable almost surely, so we employ concentration bounds which hold with high probability. For any $a>0$, consider the event
\begin{equation*}
{\mathcal M}(a):=\Big \{\|E\|_{2\to \infty} \leq 2(1+a)\sqrt{m}\Big \}.
\end{equation*}
Noting that $\|E\|_{2\to\infty} \leq \|E\|_{2\to 2} = \sigma_\mathrm{max}(E)$, we consider the largest singular value instead. The following well-known result will be suitable for this purpose:
\begin{proposition}[{\cite[Theorem II.13]{ds2001}}{\cite{ds2003}}]\label{ds2001}
Let $m \geq k$ and $E$ be an $m \times k$ random matrix whose entries are i.i.d. standard Gaussian variables. For any $t >0$,
\begin{equation*}
\mathbb{P} \left(\sigma_\mathrm{max}(E) \geq \sqrt{m}+\sqrt{k} + t  \right) < e^{-t^2/2}.
\end{equation*}
\end{proposition}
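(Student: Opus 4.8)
The plan is to recognize that this is a known concentration result for the largest singular value of a Gaussian matrix, so the "proof" amounts to assembling two standard ingredients: the Gaussian comparison inequality of Gordon (or Slepian) for the expected operator norm, and Gaussian concentration of measure for Lipschitz functions. Since the statement is quoted from \cite{ds2001,ds2003}, I would either cite it outright or, if a self-contained argument is wanted, proceed as follows.

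\emph{Step 1: Expected value bound.} First I would show $\mathbb{E}\,\sigma_{\max}(E) \leq \sqrt{m} + \sqrt{k}$. Write $\sigma_{\max}(E) = \sup_{u \in S^{k-1}, \, v \in S^{m-1}} \langle v, Eu \rangle$, which is a supremum of the Gaussian process $X_{u,v} := \langle v, Eu \rangle$ indexed by $(u,v) \in S^{k-1} \times S^{m-1}$. Compare it with the simpler process $Y_{u,v} := \langle g, u \rangle + \langle h, v \rangle$ where $g \in \R^k$, $h \in \R^m$ are independent standard Gaussians. A direct computation of the increments shows $\mathbb{E}|X_{u,v} - X_{u',v'}|^2 \leq \mathbb{E}|Y_{u,v} - Y_{u',v'}|^2$, so Gordon's inequality (or the Sudakov–Fernique form of Slepian's lemma) gives $\mathbb{E}\sup X_{u,v} \leq \mathbb{E}\sup Y_{u,v} = \mathbb{E}\|g\|_2 + \mathbb{E}\|h\|_2 \leq \sqrt{k} + \sqrt{m}$.

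\emph{Step 2: Concentration.} Next I would invoke the fact that $E \mapsto \sigma_{\max}(E)$ is a $1$-Lipschitz function of the entries of $E$ with respect to the Frobenius (Euclidean) norm: indeed $|\sigma_{\max}(E) - \sigma_{\max}(E')| \leq \|E - E'\|_{2\to 2} \leq \|E - E'\|_F$. The Gaussian concentration inequality for Lipschitz functions then yields, for any $t > 0$,
$$ \mathbb{P}\big( \sigma_{\max}(E) \geq \mathbb{E}\,\sigma_{\max}(E) + t \big) \leq e^{-t^2/2}. $$
Combining with the bound from Step 1 gives $\mathbb{P}(\sigma_{\max}(E) \geq \sqrt{m} + \sqrt{k} + t) \leq e^{-t^2/2}$, which is the claim.

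\emph{Main obstacle.} There is no serious obstacle here since both ingredients are classical; the only point requiring a little care is the increment comparison in Step 1 (verifying the inequality between the two Gaussian processes' increment variances), and making sure the version of Slepian/Gordon being applied is stated for suprema rather than for pointwise comparisons. Given that the result is explicitly attributed to Davidson and Szarek, in the paper itself the cleanest route is simply to cite it and move on to applying it with a suitable choice of $t$ (here $t = (2-\sqrt{k/m})\sqrt{m}$, so that $\sqrt{m}+\sqrt{k}+t = 3\sqrt{m} \le \mu/\sqrt{2}$... — in fact the paper picks $\mu = 4\sqrt{m}$, giving room to take $t$ of order $\sqrt{m}$ and hence failure probability $e^{-\Omega(m)}$, matching the event $\mathcal{M}$ in \eqref{E:gaussopnormbdM}).
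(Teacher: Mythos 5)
The paper does not prove this proposition at all---it is imported verbatim from Davidson--Szarek, and your reconstruction (Sudakov--Fernique/Gordon comparison of $\langle v,Eu\rangle$ against $\langle g,u\rangle+\langle h,v\rangle$ to get $\mathbb{E}\,\sigma_{\max}(E)\leq\sqrt{m}+\sqrt{k}$, then Gaussian concentration for the $1$-Lipschitz function $\sigma_{\max}$) is exactly the argument in that reference, and it is correct. The only nit is in your closing aside: the paper actually takes $t=2\sqrt{m}$, so that $\sqrt{m}+\sqrt{k}+t\leq 4\sqrt{m}=\mu$ and the failure probability is $e^{-2m}$, but this does not affect the proof of the proposition itself.
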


Hence we have, with $t=2a\sqrt{m}$ and noting $m \geq k$,
\begin{equation}\label{prob-Mc}  
\mathbb{P}({\mathcal M}(a)^c) < e^{-2a^2m}.
\end{equation}

\subsection{Lower bound for $\sigmin(V_{\beta,\tilde m,l}\tilde E_l)$} Again, we employ concentration bounds. For any $\gamma > 0$, we are interested in the event
\begin{equation}\label{Egamma}
 \mathcal{E}(\gamma,\beta,l) :=\left\{ \sigmin(V_{\beta,\tilde m,l}\tilde E_l) \geq \gamma \right\}.
\end{equation}
Note that the entries of the $l\times k$ matrix $V_{\beta, \tilde m, l} \tilde E_l$ are now independent and
identically distributed, with each entry having distribution $\mathcal{N}(0,\sigma^2_{\lfloor m/l \rfloor})$.
It will suffice to use standard methods and results on the concentration
of the minimum singular value of Gaussian matrices. The nature of these results differ depending on whether $l=k$ or $l > k$ and the optimal value of $l$ depends on the desired level of probabilistic assurance.

For $l=k$, we use the following result:
\begin{proposition}[{\cite[Theorem 3.1]{rudelson2010non}}, {\cite{edelman1988eigenvalues}}]\label{P:squaregaussminsv}
Let $\Omega$ be a $k \times k$ random matrix with entries drawn
independently from $\mathcal{N}(0,1)$. Then for any $\varepsilon > 0$,
\[ \mathbb{P}\left(\sigmin(\Omega) \leq \frac{\varepsilon}{\sqrt{k}}\right) \leq \varepsilon. \]
\end{proposition}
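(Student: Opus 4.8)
The plan is to combine an elementary deterministic geometric inequality with a one-dimensional Gaussian small-ball estimate; both ingredients are classical, and the only subtle point is the \emph{sharp} constant, which is why we will ultimately quote the result from \cite{rudelson2010non,edelman1988eigenvalues} rather than reprove it from scratch. (For $\varepsilon\ge1$ there is nothing to prove, so assume $\varepsilon<1$.) For the geometric reduction, let $r_1,\dots,r_k$ be the rows of $\Omega$ and, for each $j$, put $H_j:=\mathrm{span}\{r_i:i\ne j\}$. Since $\Omega$ is almost surely invertible, the $j$th column $w_j$ of $\Omega^{-1}$ satisfies $\langle r_i,w_j\rangle=\delta_{ij}$, hence $w_j=n_j/\langle n_j,r_j\rangle$ for a unit normal $n_j$ to the (almost surely $(k-1)$-dimensional) hyperplane $H_j$, so that $\|w_j\|_2=1/\mathrm{dist}(r_j,H_j)$. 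Writing $\|\Omega^{-1}v\|_2\le\|v\|_1\max_j\|w_j\|_2\le\sqrt k\,\|v\|_2\max_j\|w_j\|_2$ and using $\sigmin(\Omega)=1/\|\Omega^{-1}\|_{2\to2}$ gives the deterministic bound
\[
\sigmin(\Omega)\ \ge\ \frac{1}{\sqrt k}\ \min_{1\le j\le k}\mathrm{dist}(r_j,H_j).
\]

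Next comes the Gaussian input. Conditioning on $\{r_i:i\ne j\}$ fixes $H_j$, and then $\mathrm{dist}(r_j,H_j)=|\langle r_j,n_j\rangle|$ with $\langle r_j,n_j\rangle\sim\mathcal N(0,1)$, so $\mathbb P\big(\mathrm{dist}(r_j,H_j)\le t\big)\le\sqrt{2/\pi}\,t$ for every $t>0$ (and this survives integrating out the conditioning). A union bound over $j$ together with the displayed inequality yields $\mathbb P\big(\sigmin(\Omega)\le t/\sqrt k\big)\le\sqrt{2/\pi}\,k\,t$ --- a bound of exactly the stated shape, but with the radius $\varepsilon/\sqrt k$ weakened to (a constant times) $\varepsilon/k^{3/2}$.

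Removing this spurious factor of $k$ is the main obstacle, and it is where the cited works do the real work. The key structural fact is that the smallest right singular vector of a Gaussian matrix is almost surely \emph{incompressible}: a constant fraction of its coordinates are $\gtrsim 1/\sqrt k$ in modulus. Hence in the displayed inequality one may replace the single minimizing index by a maximum of $\mathrm{dist}(r_j,H_j)$ over a linear number of indices, and the event that \emph{all} of those distances are small is far rarer than the union bound indicates. Carrying this through --- via the compressible/incompressible decomposition and small-ball (Littlewood--Offord type) estimates of \cite{rudelson2010non}, or alternatively via Edelman's explicit formula for the density of $\sigmin(\Omega)$ in the square real Gaussian case \cite{edelman1988eigenvalues}, from which $\mathbb P(\sigmin(\Omega)\le\varepsilon/\sqrt k)\le\varepsilon$ follows by bounding the density --- produces the sharp constant, and we quote the conclusion in the stated form.
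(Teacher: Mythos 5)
The paper offers no proof of this proposition --- it is imported verbatim from \cite{rudelson2010non} and \cite{edelman1988eigenvalues} --- and your proposal ultimately does the same, so the two are in agreement on where the statement actually comes from. Your preliminary reduction via $\sigmin(\Omega)\ge k^{-1/2}\min_j\mathrm{dist}(r_j,H_j)$ and the Gaussian small-ball bound is correct as far as it goes, and you rightly flag that it only yields $\mathbb{P}\left(\sigmin(\Omega)\le\varepsilon/\sqrt{k}\right)\le\sqrt{2/\pi}\,k\,\varepsilon$, so the sharp constant (the removal of the extra factor of $k$) rests, both for you and for the paper, on the cited results rather than on anything proved here.
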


For $l>k$, there are standard estimates concentration estimates (e.g. \cite{ds2001, rudelson2009smallest, vershynin2010introduction}) which take the lower end of the spectrum $\sqrt{l}-\sqrt{k}$ as a reference point. 
In contrast, the following result concerns the probability of the smallest singular value being near zero. The proof, which is based on a standard covering technique, is given in Appendix \ref{appendix}.
\begin{theorem}\label{T:svmingaussrect}
Let $l > k$ and $\Omega$ be an $l \times k$ random matrix whose entries are drawn independently from $\mathcal{N}(0,1)$. Then for any $0 < \varepsilon < 1$,
\[ \mathbb{P}\left( \sigmin(\Omega) \leq \varepsilon\sqrt{l}/2\right) 
\leq
2 \left (10+8\sqrt{\log \varepsilon^{-1}}\right)^k  e^{l/2} \varepsilon^{l-k+1}.
 \]
\end{theorem}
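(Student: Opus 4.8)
The plan is to run the classical $\varepsilon$-net (covering) argument for the least singular value of a tall Gaussian matrix, combined with a small-ball estimate for $\chi^2$ variables and an operator-norm tail bound. Write $\sigmin(\Omega)=\min_{x\in S^{k-1}}\|\Omega x\|_2$. For a fixed unit vector $x$, the vector $\Omega x$ is standard Gaussian in $\R^l$, so $\|\Omega x\|_2^2\sim\chi^2_l$; dropping the exponential factor in the Gamma density and using $\Gamma(l/2+1)\ge(l/2e)^{l/2}$ gives the small-ball bound
\[
\mathbb{P}\bigl(\|\Omega x\|_2\le s\bigr)=\mathbb{P}\bigl(\chi^2_l\le s^2\bigr)\le\bigl(es^2/l\bigr)^{l/2}.
\]
The net step requires control of $\|\Omega\|_{2\to 2}$: I would set $K:=C\sqrt{l}$ with $C:=2+\sqrt{2\log\varepsilon^{-1}}$, and observe that since $l>k$ we have $\sqrt l+\sqrt k<2\sqrt l<C\sqrt l$, so Proposition~\ref{ds2001} yields $\mathbb{P}(\|\Omega\|_{2\to 2}>K)<e^{-(C-2)^2 l/2}=\varepsilon^{l}$.

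Next comes the covering reduction. Let $t:=\varepsilon\sqrt l/2$, let $\rho:=t/K=\varepsilon/(2C)<1$, and let $\mathcal N$ be a $\rho$-net of $S^{k-1}$ with $|\mathcal N|\le(1+2/\rho)^k$. On the event $\{\|\Omega\|_{2\to 2}\le K\}$, if $\sigmin(\Omega)\le t$, choose a minimizer $x$ and a net point $y$ with $\|x-y\|_2\le\rho$; then $\|\Omega y\|_2\le\|\Omega x\|_2+K\rho\le 2t=\varepsilon\sqrt l$. A union bound over $\mathcal N$ together with the small-ball bound at $s=\varepsilon\sqrt l$ gives
\[
\mathbb{P}\bigl(\sigmin(\Omega)\le t,\ \|\Omega\|_{2\to 2}\le K\bigr)\le(1+2/\rho)^k\bigl(e\varepsilon^2\bigr)^{l/2}=(1+2/\rho)^k\varepsilon^{l}e^{l/2}.
\]
Since $1+2/\rho=1+4C/\varepsilon\le(1+4C)\varepsilon^{-1}$, this is at most $(1+4C)^k\varepsilon^{l-k}e^{l/2}$. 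Adding the operator-norm tail bound and using $\varepsilon^{l}\le\varepsilon^{l-k}\le\varepsilon^{l-k}e^{l/2}$ yields $\mathbb{P}(\sigmin(\Omega)\le\varepsilon\sqrt l/2)\le\bigl((1+4C)^k+1\bigr)\varepsilon^{l-k}e^{l/2}$.

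It then remains to tidy up the prefactor: with $a:=1+4C=9+4\sqrt2\,\sqrt{\log\varepsilon^{-1}}$ and $b:=10+8\sqrt{\log\varepsilon^{-1}}$ we have $1\le a<b$ and $b-a=1+(8-4\sqrt2)\sqrt{\log\varepsilon^{-1}}\ge 1$, so $b^k-a^k\ge b-a\ge 1$ for every $k\ge 1$, i.e. $a^k+1\le b^k$, which is exactly the asserted bound. The one delicate point is the bookkeeping of absolute constants: the choice of $C$, the post-net target radius $\varepsilon\sqrt l$, and the Gamma-function lower bound all have to be calibrated so the final prefactor fits under $10+8\sqrt{\log\varepsilon^{-1}}$, but there is a comfortable margin at each place. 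Everything else — the $\chi^2$ left-tail estimate, the covering-number bound, and the operator-norm concentration from Proposition~\ref{ds2001} — is routine.
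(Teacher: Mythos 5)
Your proposal is correct and follows essentially the same route as the paper: a small-ball estimate for $\|\Omega x\|_2$ at a fixed direction, operator-norm control via Proposition~\ref{ds2001} with a threshold of the form $\mathrm{const}+\mathrm{const}\sqrt{\log\varepsilon^{-1}}$ chosen so the failure probability is $\varepsilon^{l}$, a $\rho$-net with $\rho\asymp\varepsilon/\|\Omega\|$-scale, and a union bound, with the constants landing under the stated prefactor. The only cosmetic differences are that the paper proves the small-ball lemma by a Chernoff/moment-generating-function argument rather than the $\chi^2$ density plus Stirling, and states the net step in the contrapositive; both are equivalent.
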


Note that $\varepsilon$ needs to be sufficiently small for the above probabilistic bound to be effective, i.e., less than 1.

Proposition \ref{P:squaregaussminsv} and Theorem \ref{T:svmingaussrect} have a simple interpretation regarding beta duals of Gaussian frames. Suppose, for simplicity, that $m$ is divisible by both $k$ and $l$. We see that the $(\beta,m,l)$-condensation of a Gaussian frame $E$ in $\R^k$ for $l > k$ is a lot less likely to be near-singular than the $(\beta,m,k)$-condensation.  This comment applies verbatim to the $(\beta,m,l)$- and the $(\beta,m,k)$-duals of $E$ since the beta condensation and the beta dual of $E$ have the same frame ratios (condition numbers). This is illustrated in Figure \ref{fig:betarandplot}.

\begin{figure}[htp]
\centering
\includegraphics[height=6cm]{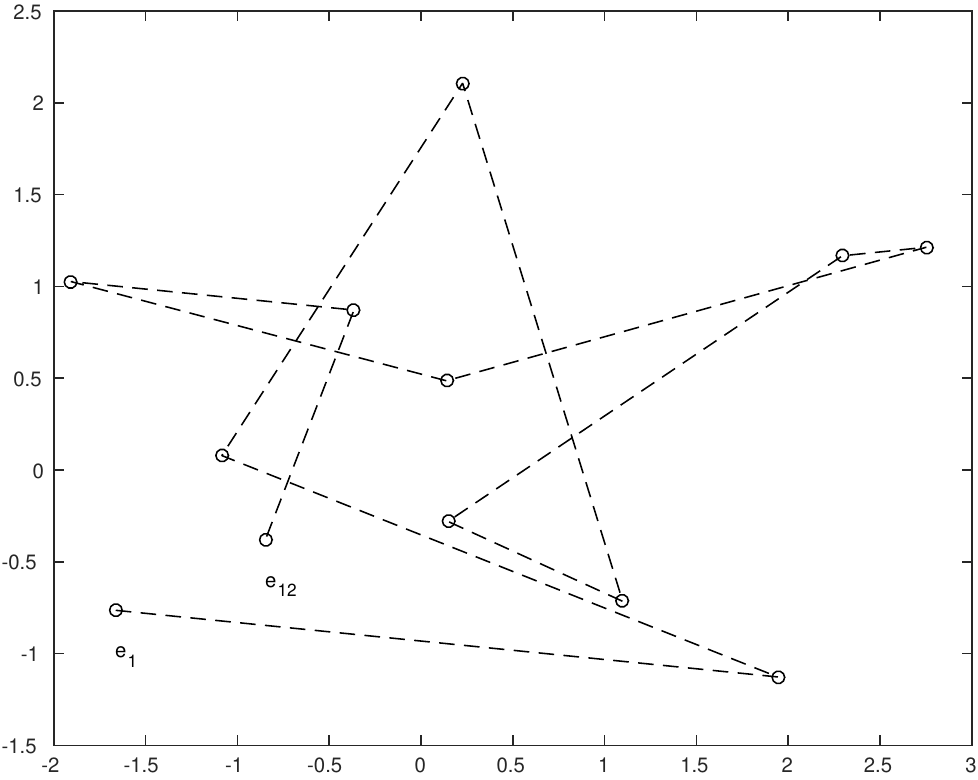}
\includegraphics[height=6cm]{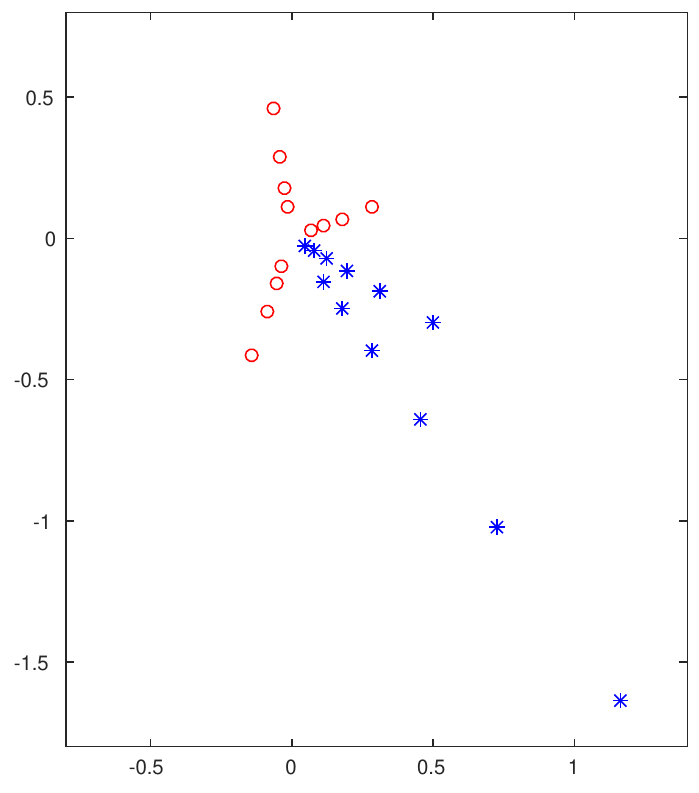}
\caption{Left: A randomly chosen frame $E$ of size $m=12$ in $\R^2$ plotted with its ordering.
Right: Two different $(\beta,m,l)$-duals of $E$ with $\beta = 1.6$. The frame ratio of the beta dual for $l=2$ (stars) is approximately 9.15 and for $l=3$ (circles) 2.07. (For comparison, the frame ratio of $E$ is equal to 1.53.)}
\label{fig:betarandplot}
\end{figure}

\subsection{Bounding the analysis distortion}
As implied by the discussion in the previous subsection, the nature of the bounds on the analysis distortion that we will be able to obtain will depend on whether we opt for $l=k$ or $l > k$. 
The next theorem is based on the former choice.

\begin{theorem}\label{thm_leqk}
Let $E$ be a standard Gaussian frame of $m$ vectors in $\R^k$, $m\geq k$. For any $L \geq 2$, $a > 0$, and $\varepsilon > 0$, we have 
\[ \sDa(E,L) <  2e (1+a) \varepsilon^{-1} m^{3/2}  L^{-\lfloor m/k \rfloor} \]
with probability at least $1- e^{-2a^2 m}- \varepsilon $.
\end{theorem}
\begin{proof}
Let $\mu = 2(1+a)\sqrt{m}$. For any $\beta\geq 1$ (whose specific value will be provided shortly), consider the event ${\mathcal M}(a)\cap \mathcal{E}(\gamma,\beta,k)$ where $\gamma = \frac{\varepsilon}{\sqrt{k}}\sigma_{\lfloor m/k \rfloor}$. By \eqref{prob-Mc} and Proposition \ref{P:squaregaussminsv}, this event occurs with probability at least $1- e^{-2a^2 m} -\varepsilon$. With the simple bound $\sigma_{\lfloor m/k \rfloor} \geq \beta^{-1}$, we have 
$ \sigmin(V_{\beta,\tilde m,k}\tilde E_k) \geq \frac{\varepsilon}{\beta \sqrt{k}}$, and therefore 
the analysis distortion bound \eqref{DaEL_bound2} with $l=k$ yields
\[ \sDa(E,L) \leq  k \varepsilon^{-1} \delta  \beta^{-\lfloor m/k \rfloor+1} \]
for any $\delta>0$ such that $(\beta,\delta) \in S_{\mu,L}$.
Now we specify the pair $(\beta,\delta) \in S_{\mu,L}$. With $\alpha =\lfloor m/k \rfloor-1$, we set 
$\beta = L(\alpha+1)/(\alpha+2)$ and $\delta = \mu(\alpha +2)/L$ as in  
Lemma \ref{calculus_lemma} and obtain
\[
\sDa(E,L) < k \varepsilon^{-1} \mu e  \lfloor m/k \rfloor L^{-\lfloor m/k \rfloor}.
\]
Substituting the value of $\mu$ and simplifying completes the proof.
\end{proof}

The parameter $a$ is largely inconsequential and could be fixed, e.g. $a=1$. Meanwhile, the most favorable choice of $\varepsilon$ depends on the probabilistic assurance as well as the analysis distortion guarantee that is desired. For example, if $k$ is comparable to $\log L$, then choosing $\epsilon = L^{-\eta  \lfloor m/k \rfloor}$ for some (small) $\eta > 0$ would guarantee, without affecting the analysis distortion significantly, that the failure probability is exponentially small in $m$. However, when $k$ is large (or $L$ is small), this choice will not correspond to a practical scenario. In any case, the above theorem is sufficient to conclude that as $m\to \infty$, we have $\sDa(E,L) =  O\left(L^{-(1+o(1))m/k}\right )$ with probability $1-o(1)$.

We now present an alternative probabilistic bound on the analysis distortion based on the choice $l > k$, which will ultimately lead to the proof of Theorem \ref{T:MAIN}. 

\begin{theorem}\label{thm_lgtrk}
Let $E$ be a standard Gaussian frame of $m$ vectors in $\R^k$, $m > k$. 
For any $L \geq 2$, $a > 0$, $\varepsilon > 0$, and $l$ such that $k < l \leq m$, we have 
\begin{equation}\label{analysis_dist_2}
\sD_{\mathrm{a}}(E,L) < 4e(1+a)\varepsilon^{-1} m^{3/2}l^{-1} L^{-\lfloor m/l \rfloor} 
\end{equation}
with probability at least 
$1- e^{-2a^2 m} -2(10+8\sqrt{\log \varepsilon^{-1}})^k  e^{l/2} \varepsilon^{l-k+1}$.
\end{theorem}
\begin{proof}
We follow the same route as in Theorem \ref{thm_leqk}.
Again, let $\mu := 2(1+a)\sqrt{m}$, and for $\beta \geq 1$ to be specified shortly, consider the event ${\mathcal M}(a)\cap \mathcal{E}(\gamma,\beta,l)$ (recall the definition given in \eqref{Egamma}) where 
$\gamma = \varepsilon \sigma_{\lfloor m/l \rfloor} \sqrt{l}/2$. 
By \eqref{prob-Mc} and  Theorem \ref{T:svmingaussrect}, this event occurs with probability at least $1- e^{-2a^2 m} -2(10+8\sqrt{\log \varepsilon^{-1}})^k  e^{l/2} \varepsilon^{l-k+1}$.
Again using $\sigma_{\lfloor m/l \rfloor} \geq \beta^{-1}$, we have 
$ \sigmin(V_{\beta,\tilde m,l}\tilde E_l) \geq \frac{\varepsilon \sqrt{l}}{2 \beta}$, and therefore 
the analysis distortion bound \eqref{DaEL_bound2} yields
\[ \sDa(E,L) \leq  2 \varepsilon^{-1} \delta  \beta^{-\lfloor m/l \rfloor+1} \]
for any $\delta>0$ such that $(\beta,\delta) \in S_{\mu,L}$.
With $\alpha =\lfloor m/l \rfloor-1$, we again set 
$\beta = L(\alpha+1)/(\alpha+2)$ and $\delta = \mu(\alpha +2)/L$ as in  
Lemma \ref{calculus_lemma} and obtain
\[
\sDa(E,L) < 2 \varepsilon^{-1} \mu e  \lfloor m/l \rfloor L^{-\lfloor m/l \rfloor}.
\]
Substituting the value of $\mu$ and simplifying completes the proof.
\end{proof}

When $l > k$ and $\varepsilon$ is small, Theorem \ref{thm_lgtrk} provides a more favorable bound on the failure probability compared to Theorem \ref{thm_leqk} because of the presence of $\varepsilon^{l-k+1}$. We now turn this observation into a concrete form by suitably choosing the values of $\varepsilon$ and $l$.

For simplicity, let $a=1$. For any (small) $\theta \in (0,1)$, let $\varepsilon = L^{-\theta m/k}$ and $l = k+ \lfloor \theta k \rfloor$.
Suppose $k \geq 1/\theta$ and $m/k \geq (1+\theta)/\theta^2$. Clearly we have $l > k$ and also
\begin{equation*}
 \left \lfloor \frac{m}{l} \right \rfloor \geq \left \lfloor \frac{m}{(1+\theta)k} \right \rfloor \geq 
\frac{m}{(1+\theta)k} - 1 \geq (1-\theta)\frac{m}{k}
\end{equation*}
so that the error bound \eqref{analysis_dist_2} is effective and can be simplified to
\begin{equation*} 
\sD_{\mathrm{a}}(E,L) < 8em^{3/2}k^{-1} L^{-(1-2\theta)\frac{m}{k}}. 
\end{equation*}
Furthermore, this bound is simultaneously valid for all $L \geq 2$ if the event
${\mathcal M}(1) \cap {\mathcal E}_\theta$ where
\begin{equation*}
{\mathcal E}_\theta := \bigcap_{L=2}^\infty {\mathcal E}(\gamma,\beta,l)
\end{equation*}
occurs.
Here $\gamma = \varepsilon \sigma_{\lfloor m/l \rfloor} \sqrt{l}/2$ and $\beta = L(\alpha+1)/(\alpha+2)$ with $\alpha = \lfloor m/l \rfloor - 1$ are as in the proof of Theorem \ref{thm_lgtrk}. 

By Theorem \ref{T:svmingaussrect}, 
the probability of $\mathcal{E}(\gamma,\beta,l)^c$ is bounded by
$2(10+8\sqrt{\log \varepsilon^{-1}})^k  e^{l/2} \varepsilon^{l-k+1}$. To simplify this bound, note that
$13 \sqrt{\log 2} > 10$ so that 
$10+8\sqrt{\log \varepsilon^{-1}} < 21\sqrt{(m/k)\log L }$. Hence, together with the observation that  $l- k+1= \lfloor \theta k \rfloor +1 > \theta k$, we have
\begin{equation*}
{\mathbb P}(\mathcal{E}(\gamma,\beta,l)^c) \leq 
2 \exp\left\{k \left(\log 21 +\frac{1}{2}  \log\frac{m}{k} + \frac{1}{2} \log \log L  + \frac{1}{2}(1+\theta)  \right)
-m \theta^2 \log L \right \}.
\end{equation*}
There exists an absolute constant $c_1 > 0$ such that if
$\frac{m}{k} \geq \frac{c_1 }{\theta^2} \log \frac{m}{k}$, then
\begin{equation*}
k \left( \log 21 +\frac{1}{2}  \log\frac{m}{k} + \frac{1}{2} \log \log L  + \frac{1}{2}(1+\theta)  \right)
\leq \frac{1}{2}m \theta^2 \log L
\end{equation*}
so that 
\begin{equation*}
\mathbb{P}(\mathcal{E}(\gamma,\beta,l)^c) \leq 2 L^{-\frac{1}{2} \theta^2 m}. 
\end{equation*}
By readjusting the value of $c_1$ if necessary, we are guaranteed that 
$m \geq 4/\theta^2$ so that $\frac{1}{2}\theta^2 m \geq 2$ and
\begin{equation*}
\mathbb{P} ({\mathcal E}_\theta^c) \leq 
\sum_{L=2}^\infty 2 L^{-\frac{1}{2}\theta^2 m} <
2\left( 2^{-\frac{1}{2}\theta^2 m} + \int_2^\infty t^{-\frac{1}{2}\theta^2 m}\,\mathrm{d}t \right) \leq
6\cdot2^{-\frac{1}{2}\theta^2 m}. 
\end{equation*}
Setting $\eta = 2\theta$ and restricting $\theta \in (0,1/2)$, we obtain the following corollary: 
\begin{corollary}\label{cor}
There exists an absolute constant $c_1 > 0$ such that if $\eta \in (0,1)$, and 
$m$ and $k$ are such that $k \geq 2/\eta$ and $\frac{m}{k} \geq \frac{c_1}{\eta^2} \log \frac{m}{k}$, a standard Gaussian frame $E$ of $m$ vectors in $\R^k$ satisfies 
\[ \sD_{\mathrm{a}}(E,L) < 8em^{3/2}k^{-1} L^{-(1-\eta) m/k} \]
for all $L \geq 2$ with probability at least $1 -6\cdot 2^{-\frac{1}{8}\eta^2 m} - e^{-2m}$.
\end{corollary}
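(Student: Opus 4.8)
The plan is to assemble the chain of estimates already prepared in the discussion preceding the statement, and then to make the two substitutions that collapse the two free parameters $\eta$ and $\theta$ into one. First I would invoke Theorem \ref{thm_lgtrk} with the condensation parameter $l = k + \lfloor \theta k \rfloor$, where $\theta \in (0,1/4]$ is to be fixed and $\theta \geq 2/k$ so that $k < l \leq (1+\theta)k$. Since $\lfloor m/l \rfloor \geq \lfloor m/((1+\theta)k) \rfloor$ and $l - k = \lfloor \theta k \rfloor$, this yields the distortion bound \eqref{analysis_dist_3} on the event $\mathcal{E}(\gamma,\beta,l) \cap \mathcal{M}$ with the $L$-dependent choices of $\gamma,\beta$ supplied by that theorem, together with the explicit bound on $\mathbb{P}(\mathcal{E}(\gamma,\beta,l)^c)$ written just after \eqref{analysis_dist_3}, in which the combinatorial prefactor has been simplified via $10 + 8\sqrt{\eta \lfloor m/l\rfloor \log L} \leq 21\sqrt{(m/k)\log L}$ so that the exponent there is a term linear in $k$, growing only like $\log(m/k) + \log\log L$, minus $\eta \lfloor m/((1+\theta)k)\rfloor \lfloor \theta k\rfloor \log L$.

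Next I would clean up the failure probability. Under $k \geq 8$, $2/k \leq \theta \leq 1/4$ and $m/k \geq 5/2$ one has $\lfloor m/((1+\theta)k)\rfloor \lfloor \theta k\rfloor \geq m\theta/5$, which replaces the last exponent term by $\frac{1}{5}\eta\theta m \log L$. Introducing an absolute constant $c_1$ and imposing $m/k \geq \frac{c_1}{\eta\theta}\log\frac{m}{k}$ together with $k \geq 2/\theta$ makes the linear-in-$k$ term at most $\frac{1}{10}\eta\theta m\log L$, so $\mathbb{P}(\mathcal{E}(\gamma,\beta,l)^c) \leq L^{-\frac{1}{10}\eta\theta m}$; since the hypotheses force $m \geq 20/(\eta\theta)$, a union bound over $L \geq 2$ (with $m \geq 20/(\eta\theta)$ ensuring the exponent is at least $2$) gives $\mathbb{P}\big( \bigcup_{L \geq 2} \mathcal{E}(\gamma,\beta,l)^c \big) \leq 3 \cdot 2^{-\frac{1}{10}\eta\theta m}$, and one checks that a large enough $c_1$ also secures $1 - \frac{1}{\lfloor m/l\rfloor (L-1)} > \eta$, so Theorem \ref{thm_lgtrk} genuinely applies for every $L \geq 2$. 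Adding $\mathbb{P}(\mathcal{M}^c) \leq e^{-2m}$ from \eqref{E:gaussopnormbdM} and using \eqref{DaEL_bound2}, the bound \eqref{analysis_dist_3} holds for all $L \geq 2$ at once with probability at least $1 - 3 \cdot 2^{-\frac{1}{10}\eta\theta m} - e^{-2m}$.

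Finally I would simplify the distortion exponent and reduce to one parameter. If $m/k \geq (1+\theta)/\theta^2$ then $\lfloor m/((1+\theta)k)\rfloor \geq (1-\theta)m/k$, so \eqref{analysis_dist_3} becomes $\sDa(E,L) < 16e m^{3/2} k^{-1} L^{-(1-\eta-\theta)m/k}$; taking $\theta = \eta$ and then replacing $\eta$ by $\eta/2$ throughout produces the exponent $(1-\eta)m/k$, the failure probability $3 \cdot 2^{-\frac{1}{40}\eta^2 m} + e^{-2m}$, the condition $k \geq 2/\theta = 4/\eta$, and the condition $m/k \geq \frac{c_1}{\eta\theta}\log\frac{m}{k} = \frac{4c_1}{\eta^2}\log\frac{m}{k}$, which after renaming $4c_1 \to c_1$ is exactly the hypothesis in the statement. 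The main obstacle — really the only nontrivial point in this assembly — is the bookkeeping that guarantees one fixed choice of $c_1$ makes the single hypothesis $m/k \geq \frac{c_1}{\eta^2}\log\frac{m}{k}$ simultaneously imply every auxiliary inequality used above, namely $k \geq 8$ with $2/k \leq \theta$, $m/k \geq 5/2$, $m/k \geq (1+\theta)/\theta^2$, $m \geq 20/(\eta\theta)$, and $1 - \frac{1}{\lfloor m/l\rfloor(L-1)} > \eta$; this works because the hypothesis forces $m/k$ to exceed any prescribed constant multiple of $\eta^{-2}$ once $c_1$ is taken large enough (recall $\eta < 1/2$).
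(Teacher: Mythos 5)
Your proposal is correct and follows essentially the same route as the paper's own derivation: instantiate Theorem \ref{thm_lgtrk} at $l = k + \lfloor \theta k\rfloor$, reduce the failure probability to $L^{-\frac{1}{10}\eta\theta m}$ under the stated side conditions, take a union bound over $L \geq 2$, simplify the exponent via $\lfloor m/((1+\theta)k)\rfloor \geq (1-\theta)m/k$, and finally set $\theta = \eta$ and replace $\eta$ by $\eta/2$. The constant bookkeeping (the factor $21$, the $m\theta/5$ bound, the passage from $\frac{1}{10}$ to $\frac{1}{40}$, and the renaming $4c_1 \to c_1$) matches the paper exactly.
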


\paragraph{Proof of Theorem \ref{T:MAIN}.}
We make the following observations. First, increasing the value of $c_1$ in Corollary \ref{cor} if necessary, the assumption
$\frac{m}{k} \geq \frac{c_1}{\eta^2} \log \frac{m}{k}$
implies that $\frac{m}{k}  \geq \frac{c_1}{\eta^2}$, and therefore
\begin{equation*}
8e \left (\frac{m}{k} \right)^{3/2} <  e^{4+1.5 \log(m/k)} < L^{c_3\eta^2 m/k},
\end{equation*}
where $c_3 = 8/c_1$. Again increasing the value of $c_1$ if necessary, 
we can assume that $c_3\leq 1$ so that $1-\eta - c_3  \eta^2 \geq 1-2\eta$. Next, we can restrict $\eta \in (0,1/2)$, set $\eta'=2\eta$ and $c_1'=4c_1$. This yields the error bound $\sqrt{k} L^{-(1-\eta')m/k}$. For the probability guarantee, 
note that $m = k(m/k)\geq (4/\eta') (c_1'/\eta'^2) = 4c_1'/\eta'^3$. Once again increasing the value of $c_1'$ if necessary, we can find an absolute constant $c_2>0$ such that $6\cdot 2^{-\frac{1}{32}\eta'^2 m} + e^{-2m}
\leq e^{-c_2 \eta'^2 m}$ for all admissible $m$, hence Theorem \ref{T:MAIN} follows once we denote $\eta'$ by $\eta$ and $c_1'$ by $c_1$ again.

\subsection{Results for sub-Gaussian distributions}

Our methods are applicable in the case of any other probability distribution governing the frame $E$ since the basic quantization algorithm and the resulting bound on the analysis distortion are deterministic. As we discussed in depth for the Gaussian distribution, two types of probabilistic bounds need to be established for $E$: The first is an upper bound on $\|E\|_{2 \to \infty}$ which may be replaced by an upper bound on $\sigma_\mathrm{max}(E)$, and the second is a lower bound on $\sigma_\mathrm{min}(V_{\beta,\tilde m,l}\tilde E_l)$. 

For simplicity we will continue to assume that the entries of $E$ are chosen independently, but now more generally from a centered sub-Gaussian distribution of unit variance. Recall that a random variable $X$ on $\R$ is called sub-Gaussian if there exists $K > 0$, called the {\em sub-Gaussian moment} of $X$, such that
${\mathbb P}(|X| >t) \leq 2 e^{-t^2/K^2}$ for all $t> 0$ \cite{vershynin2010introduction}.

Regarding the upper bound on $\sigma_\mathrm{max}(E)$, the generalization of Proposition \ref{ds2001} to sub-Gaussian random matrices says that 
\begin{equation*}
{\mathbb P}\Big (\sigma_\mathrm{max}(E) > K_1(\sqrt{m} + \sqrt{k}) +t \Big)
\leq 2 e^{-t^2/K_2^2}
\end{equation*}
for positive absolute numerical constants $K_1$ and $K_2$ (see, e.g. \cite{rudelson2010non}). Consequently, we may simply set $\mu:=2(K_1 +1)\sqrt{m}$ which is essentially the same as in the Gaussian case, albeit for a slightly larger constant. In the special case of bounded distributions, then it is actually possible to set $\mu:=O(\sqrt{k})$ which is even better than the Gaussian case, and in a certain sense, the best we can ever achieve.

As for the lower bound on $\sigma_\mathrm{min}(V_{\beta,\tilde m,l}\tilde E_l)$, note that the entries of $\Omega := V_{\beta,\tilde m,l}\tilde E_l$ are also independent because each entry of $E$ influences at most one entry of $ V_{\beta,\tilde m,l}\tilde E_l$, thanks to the structure of the beta condensation matrix $V_{\beta,\tilde m,l}$. (Again thanks to this structure, one may even weaken the entry-wise independence assumption we made above.) In addition, the entries of $\Omega$ are identically distributed and are centered sub-Gaussian random variables \cite[Lemma 5.9]{vershynin2010introduction}. Proposition \ref{P:squaregaussminsv} and Theorem \ref{T:svmingaussrect} have also been generalized to the sub-Gaussian case, and in fact as part of one common result:
If $\Omega$ is an $l\times k$ matrix whose entries are independent and identically distributed sub-Gaussian random variables with zero mean and unit variance, then 
\begin{equation}\label{subGaussiansmallestsingval}
{\mathbb P}\Big (\sigma_\mathrm{min}(\Omega) \leq \varepsilon\big (\sqrt{l} - \sqrt{k-1}\big ) \Big) \leq (K_3\varepsilon)^{l-k+1} + c^l
\end{equation}
where $K_3 > 0$ and $c \in (0,1)$ depend only on the sub-Gaussian moment of the entries \cite{rudelson2009smallest}. 

With these tools, it is possible to obtain fairly strong bounds on the analysis distortion of sub-Gaussian frames, but not quite as strong as those obtainable for the Gaussian case. The main problem stems from the presence of the $c^l$ term in \eqref{subGaussiansmallestsingval} which, as explained in \cite{rudelson2009smallest,rudelson2010non}, is characteristic of the sub-Gaussian case. As such, this term prevents us from stating an exact analog of Theorem \ref{T:MAIN}. In particular, $l$ must be required to go to infinity in order to ensure that the failure probability bound vanishes as $m \to \infty$. Consequently, the resulting analysis distortion bound would not be near-optimal in the sense we currently have, but of the form $L^{-o(m)}$. For example, if we let $l \approx m^\kappa k^{1-\kappa}$ for some $\kappa \in (0,1)$, then we can obtain a ``root-exponential'' distortion bound of the form $L^{-(m/k)^{1-\kappa}}$ which is guaranteed also up to root-exponentially small failure probability. Alternatively, the more conservative choice $l \approx k+\log m$ would result in a distortion bound of the form $L^{-m/(k+\log m)}$ up to a failure probability of $O(m^{-a})$ where $c=e^{-a}$.

\section*{Acknowledgements}

The authors would like thank Thao Nguyen, Rayan Saab and \"Ozg\"ur Y\i lmaz for the useful conversations on the topic of this paper and the anonymous referees for the valuable comments and the references they have brought to our attention.

\appendix

\section{Appendix}\label{appendix}
\begin{lemma} \label{P:gaussbdlow}
Let $\xi \sim \mathcal{N}(0,I_l)$. For any $0 < \varepsilon \leq 1$, we have
\begin{equation*}
\mathbb{P}\left (\| \xi\|_2 \leq \varepsilon \sqrt{l}  \right)
\leq \varepsilon^{l} e^{(1-\varepsilon^2)l/2}.
\end{equation*}
\end{lemma}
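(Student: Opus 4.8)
\textbf{Proof proposal for Lemma \ref{P:gaussbdlow}.}
The plan is to bound the lower tail of $\|\xi\|_2^2$, which is a chi-squared random variable with $l$ degrees of freedom, by a standard Chernoff (Laplace transform) argument. First I would observe that for any $t > 0$,
$$ \mathbb{P}\left(\|\xi\|_2 \leq \varepsilon\sqrt{l}\right) = \mathbb{P}\left(e^{-t\|\xi\|_2^2} \geq e^{-t\varepsilon^2 l}\right) \leq e^{t\varepsilon^2 l}\, \mathbb{E}\left[e^{-t\|\xi\|_2^2}\right], $$
by Markov's inequality. Since the $\xi_i$ are i.i.d.\ $\mathcal{N}(0,1)$ and $\mathbb{E}[e^{-t\xi_i^2}] = (1+2t)^{-1/2}$ for all $t > -1/2$, the right-hand side equals $e^{t\varepsilon^2 l}(1+2t)^{-l/2}$.

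Next I would optimize the exponent $g(t) := t\varepsilon^2 l - \tfrac{l}{2}\log(1+2t)$ over $t > 0$. Setting $g'(t) = \varepsilon^2 l - \tfrac{l}{1+2t} = 0$ gives $1+2t = \varepsilon^{-2}$, i.e. $t = \tfrac{1-\varepsilon^2}{2\varepsilon^2}$, which is nonnegative precisely because $\varepsilon \leq 1$ (and this is where the hypothesis $\varepsilon \leq 1$ is used). Substituting this value, $t\varepsilon^2 l = \tfrac{(1-\varepsilon^2)l}{2}$ and $(1+2t)^{-l/2} = \varepsilon^{l}$, so the bound becomes $\varepsilon^l e^{(1-\varepsilon^2)l/2}$, which is exactly the claimed inequality. (One may also simply verify a posteriori that this particular choice of $t$ yields the stated bound, avoiding any discussion of optimality.)

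There is no real obstacle here: the only points requiring a moment's care are the validity range $t > -1/2$ of the Gaussian exponential moment (satisfied since $t \geq 0$), and the sign of the optimal $t$, which is guaranteed by $\varepsilon \leq 1$. The degenerate case $\varepsilon = 1$ gives the trivial bound $1$, consistent with $\mathbb{P}(\|\xi\|_2 \leq \sqrt{l}) \leq 1$. This lemma then feeds into the covering/net argument for Theorem \ref{T:svmingaussrect}: the $\varepsilon^l$ factor must beat the $\varepsilon^{-k}$-type growth of the cardinality of an $\varepsilon$-net of the unit sphere in $\R^k$, which accounts for the $\varepsilon^{l-k}$ in the statement of that theorem.
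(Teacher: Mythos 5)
Your proof is correct and is essentially the paper's own argument: the paper bounds the indicator by $e^{(\varepsilon^2 l-\|x\|_2^2)t/2}$ inside the Gaussian integral and picks $t=\varepsilon^{-2}-1$, which is exactly your Chernoff bound with the tilt parametrized as $t/2$ instead of $t$. No differences worth noting beyond that cosmetic reparametrization.
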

\begin{proof}
For any $t \geq 0$, we have 
\begin{equation*}
\mathbb{P}\left (\| \xi\|^2_2 \leq \varepsilon^2 l  \right)
\leq \int_{\R^l} e^{(\varepsilon^2 l - \|x\|_2^2)t/2} \frac{e^{-\|x\|_2^2/2}}{(2\pi)^{l/2}} \,\mathrm{d}x
= e^{\varepsilon^2 t l/2} \int_{\R^l} \frac{e^{-(1+t)\|x\|_2^2/2}}{(2\pi)^{l/2}} \,\mathrm{d}x
= \left( \frac{e^{\varepsilon^2 t}}{1+t} \right)^{l/2}.
\end{equation*}
Choosing $t = \varepsilon^{-2}-1$ yields the desired bound.
\end{proof}

\paragraph{Proof of Theorem \ref{T:svmingaussrect}.}
For an arbitrary $\tau > 1$, let $\mathcal{E}_1$ be the event $\{ \|\Omega \|_{2\to 2} \leq 2\tau \sqrt{l}\}$.
Proposition \ref{ds2001} with $m=l$, $E=\Omega$, $t=2(\tau-1)\sqrt{l}$ implies 
\begin{equation*}
\mathbb{P}(\mathcal{E}_1^c) \leq e^{-2(\tau-1)^2 l}.
\end{equation*}
Next, consider a $\rho$-net $Q$ of the unit sphere of $\R^k$ with 
$|Q| \leq 2k(1+2/\rho)^{k-1}$ (see \cite[Proposition 2.1]{rudelson2009smallest}), where we set $\rho= \varepsilon/(4\tau)$.
Let $\mathcal{E}_2$ be the event 
$\{ \|\Omega z\|_2 \geq \varepsilon \sqrt{l},\ \forall z \in Q\}$.
For each fixed $z\in \R^k$ with unit norm, $\Omega z$ 
has entries that are i.i.d. $\mathcal{N}(0,1)$.
By Lemma \ref{P:gaussbdlow}, we have
\begin{equation*}
\mathbb{P}(\mathcal{E}_2^c) \leq |Q| \varepsilon^{l} e^{(1-\varepsilon^2)l/2}
\leq 2k(\varepsilon + 8 \tau)^{k-1} \varepsilon^{l-k+1} e^{(1-\varepsilon^2)l/2}.
\end{equation*}

Suppose the event $\mathcal{E}_1 \cap \mathcal{E}_2$ occurs. 
For any unit norm $x \in \R^k$, 
there exists a $z \in Q$ such that $\|x-z\|_2 \leq \rho$. 
Then $\| \Omega(x-z)\|_2 \leq 2\tau \rho \sqrt{l} = \varepsilon\sqrt{l}/2$ and $\|\Omega z\|_2 \geq \varepsilon \sqrt{l}$, so that 
\begin{equation*}
\|\Omega x\|_2 \geq \|\Omega z\|_2 - \|\Omega(x-z)\|_2 \geq \varepsilon \sqrt{l}/2,
\end{equation*}
hence $\sigmin(\Omega) \geq \varepsilon \sqrt{l}/2$. It follows that 
$\mathcal{F}:=\left \{\sigmin(\Omega) \leq \varepsilon \sqrt{l}/2 \right\} \subset 
\mathcal{E}_1^c \cup \mathcal{E}_2^c$, and therefore 
\begin{equation*}
 \mathbb{P}\left(\mathcal{F}\right) 
\leq e^{-2(\tau-1)^2 l} + 2k(\varepsilon + 8 \tau)^{k-1} \varepsilon^{l-k+1} e^{l/2}. 
\end{equation*}

We still have the freedom to choose $\tau > 1$ as a function of $\varepsilon$, $l$, and $k$.
For simplicity, we choose $\tau = 1+\sqrt{\log \varepsilon^{-1}}$ so that 
$e^{-2(\tau-1)^2 l} = \varepsilon^{2l}$. Noting that 
$1 + k(1+8\tau)^{k-1} < (2+8\tau)^k$, we obtain
\begin{equation*} 
\mathbb{P}\left(\mathcal{F}\right) 
<  \varepsilon^{l-k+1} \left(1 + 2k(1 + 8\tau)^{k-1}  e^{l/2}\right)
<  2 \left (10+8\sqrt{\log \varepsilon^{-1}}\right)^k  e^{l/2} \varepsilon^{l-k+1}. 
\end{equation*}

\bibliographystyle{plain}
\bibliography{betadual}

\end{document}